\pdfoutput=1 
\documentclass[twocolumn]{article}
\usepackage[american]{babel}

\usepackage[paper=a4paper,margin=1.5cm]{geometry}
\setlength{\columnsep}{0.5cm}
\usepackage{hyperref}
\usepackage{natbib} 
    \bibliographystyle{plainnat}
    
\usepackage{mathtools} 
\usepackage{booktabs} 
\usepackage{tikz} 
\usepackage{amssymb}
\usepackage{amsthm}
\usepackage[ruled,vlined,linesnumbered]{algorithm2e}
\usepackage{pgfplots}	
\pgfplotsset{compat=1.17}
\usetikzlibrary{arrows}
\usepgfplotslibrary{fillbetween}
\usepackage{subcaption}


\newcommand{\R}{\mathbb R}

\newcommand{\kroneckerDelta}{\delta}
\DeclarePairedDelimiter\norm{\lVert}{\rVert}
\DeclareMathOperator*{\vecop}{vec}
\DeclareMathOperator*{\diag}{diag}
\DeclareMathOperator*{\matop}{mat}
\newcommand{\eye}{I}
\DeclareMathOperator*{\tr}{tr}
\newcommand{\hadamard}{\circ}
\newcommand{\kronecker}{\otimes}
\newcommand{\T}{\top}
\newcommand{\vecOne}{\mathbf{1}}

\DeclareMathOperator*{\cov}{Cov}
\DeclareMathOperator*{\argmin}{arg\,min}
\newcommand{\convp}{\overset{p}{\to}}
\newcommand{\convd}{\overset{d}{\to}}
\newcommand{\E}{\mathbb{E}}
\newcommand{\En}{\mathbb{E}_n}
\newcommand{\Eint}{\widetilde{\mathbb{E}}}
\newcommand{\covint}{\widetilde{\text{Cov}}}
\newcommand{\varint}{\widetilde{\text{Var}}}
\newcommand{\var}{\text{Var}}
\newcommand{\Prob}{\mathbb{P}}
\newcommand{\normal}{\mathcal N}

\newcommand{\unitBasisMatrix}{E}

\newcommand{\semCoeffMat}{W}

\newcommand{\semCoeffMatSet}{\mathcal W}
\newcommand{\semCoeffOpt}{\semCoeffMat_\circ}

\newcommand{\semCoeffEstN}{\semCoeffMat_\nData}

\newcommand{\semScaleMatrix}{M}
\newcommand{\semVector}{v}

\newcommand{\semNoise}{e}
\newcommand{\semNoiseCovariance}{\Sigma}
\newcommand{\semInterventionNoise}{\widetilde \semNoise}
\newcommand{\semInterventionNoiseCovariance}{\widetilde \semNoiseCovariance}
\newcommand{\mutilatingMatrix}{Z}
\newcommand{\cofactorMatrix}{C}
\newcommand{\dNodes}{d}

\newcommand{\averageCausalEffect}{\gamma}
\newcommand{\averageCausalEffectTarget}{\averageCausalEffect_\circ}
\newcommand{\averageCausalEffectSet}{\Gamma}
\newcommand{\averageCausalEffectEstN}{\averageCausalEffect_\nData}
\newcommand{\averageCausalEffectNumeric}{\hat{\averageCausalEffectTarget}}
\newcommand{\observationalDistribution}{p}
\newcommand{\interventionalDistribution}{\tilde p}

\newcommand{\outcomeVar}{y}
\newcommand{\decisionVar}{x}
\newcommand{\adjustmentVar}{z}
\newcommand{\adjusted}[1]{\bar{#1}}
\newcommand{\validAdjustmentVar}{\adjusted{z}}

\newcommand{\nData}{n}
\newcommand{\dagTolerance}{\epsilon}
\newcommand{\dagToleranceMax}{\dagTolerance_\star}
\newcommand{\mEstParameter}{\theta}
\newcommand{\mEstParameterEstN}{\mEstParameter_\nData}
\newcommand{\mEstParameterTrue}{\mEstParameter_\circ}
\newcommand{\mEstParameterSet}{\Theta}
\newcommand{\mEstParametrization}{L}
\newcommand{\mEstLoss}{\ell}
\newcommand{\mEstConstrint}{g}
\newcommand{\mEstCovarianceN}{\mathcal J_\nData}
\newcommand{\regCoefficient}{\beta}
\newcommand{\regCoefficientSet}{B}

\newcommand{\confidenceLevel}{\alpha}
\newcommand{\hFun}{h}
\newcommand{\qMatrix}{\mathsf{Q}}

\newcommand{\Un}{U_n}
\newcommand{\Qn}{Q_n}
\newcommand{\Qtrue}{Q_\circ}
\newcommand{\Utrue}{U_\circ}
\newcommand{\Jn}{J_n}
\newcommand{\Kn}{K_n}
\newcommand{\Ktrue}{K_\circ}
\newcommand{\Jtrue}{J_\circ}
\newcommand{\nablatheta}{\nabla}
\newcommand{\PiTrue}{\Pi_\circ}
\newcommand{\PiN}{\Pi_n}

\newcommand{\augLagSlack}{s}
\newcommand{\augLag}{\mathcal L}
\newcommand{\augLagLagMul}{\alpha}
\newcommand{\augLagPen}{\rho}
\newcommand{\augLagPenMul}{\mu}
\newcommand{\augLagPenMax}{\augLagPen_{max}}
\newcommand{\augLagIter}{k}
\newcommand{\augLagMinImprovement}{g}
\newcommand{\augLagContraint}{c}
\newcommand{\augLagConstraintTol}{\eta}

\newcommand{\assStruct}{\widehat{\semNoiseCovariance}} 
\newcommand{\semNoiseScale}{s}
\newcommand{\misspecCond}{\kappa\left(\assStruct^{-1}\semNoiseCovariance\right)}

\newcommand{\DAG}{\textsc{dag}}
\newcommand{\scm}{\textsc{scm}}
\newcommand{\OLS}{\textsc{ols}}

\theoremstyle{plain}
\newcounter{thm}
\newtheorem{theorem}[thm]{Theorem}
\newtheorem{lemma}[thm]{Lemma}
\newtheorem{corollary}[thm]{Corollary}
\theoremstyle{remark}
\newtheorem*{remark}{Remark}

\title{Inference of Causal Effects when Control Variables are Unknown}
\author{ Ludvig Hult\\Uppsala University \and Dave Zachariah\\Uppsala University }
\date{}

\begin{document}
\maketitle

\begin{abstract}
    Conventional methods in causal effect inference typically rely on specifying a valid set of control variables. When this set is unknown or misspecified, inferences will be erroneous.
    We propose a method for inferring average causal effects when all potential confounders are observed, but the control variables are unknown.
    When the data-generating process belongs to the class of acyclical linear structural causal models, we prove that the method yields asymptotically valid confidence intervals.
    Our results build upon a smooth characterization of linear directed acyclic graphs.
    We verify the capability of the method to produce valid confidence intervals for average causal effects using synthetic data, even when the appropriate specification of control variables is unknown.
\end{abstract}

\section{Introduction}
When applied researchers aim to assess the causal effect of some policy or exposure, they must often infer it from observational data. This requires controlling for variations in the outcome of interest that arise from confounding factors. After selecting a set of control variables, inferences are often drawn using regression models. But selecting a valid control variable set is in general hard and the use of invalid sets produces misleading inferences, see. e.g., \citet{carlson_illusion_2012,berneth_acritical_2016}. It is therefore of practical interest to infer causal effects without relying on the researcher to specify the control variables among all observed variables. 

In this paper, we will develop such an inferential method under the assumption that there is no unobserved confounding. The method infers average causal effects using asymptotic confidence intervals and obviates the need for specifying control variables.
 
Consider a random outcome variable $\outcomeVar$ observed after an intervention on another scalar $\decisionVar$. We denote the unknown conditional distribution of outcomes under such an intervention as
\[\outcomeVar \sim \interventionalDistribution(\outcomeVar | \decisionVar)\]
We consider the scalars $\decisionVar$ and $\outcomeVar$ to be of zero mean, i.e. $\Eint[x]=\Eint[y]=0$, where the tilde denotes that the expectation is taken with respect to the interventional distribution $\interventionalDistribution$.
The conditional mean function $\Eint[\outcomeVar|\decisionVar]$ describes the effect of the intervention and can be summarized by the distribution parameter
\begin{align}
\boxed{\averageCausalEffect \coloneqq \: \frac{\covint[\decisionVar,  \outcomeVar]}{\varint[\decisionVar]} \; \equiv \;  \argmin_{\bar{\averageCausalEffect}} \; \Eint\left[ \big( \Eint[\outcomeVar|\decisionVar]  - \bar{\averageCausalEffect}\decisionVar \big)^2 \right]}
\end{align}
Thus $\averageCausalEffect \decisionVar$ is an optimal linear approximation of the conditional mean function. When the conditional mean function is linear, the parameter is the average causal effect of the intervention, i.e., $\averageCausalEffect \equiv \frac{\partial}{\partial \decisionVar } \Eint[\outcomeVar|\decisionVar]$  \citep{angrist_mostly_2009,pearl_causality:_2009}.

The task is to infer $\averageCausalEffect$ using data from a different, \emph{observational} distribution
\begin{align}
\label{eq:dataGeneratingProcess}    
 (\decisionVar_i, \outcomeVar_i, \adjustmentVar_i) \sim \observationalDistribution(\decisionVar, \outcomeVar, \adjustmentVar),\quad i=1, \dots, \nData 
\end{align}
where $\adjustmentVar$ is a vector of additional random variables. A standard procedure to infer $\averageCausalEffect$ is to use the partial regression coefficient
\begin{equation} \label{eq:partial_regression_coeff}
    \begin{split}
    \regCoefficient \coloneqq  \frac{\cov[\adjusted{\decisionVar},  \adjusted{\outcomeVar}]}{\var[\adjusted{\decisionVar}]},
    \end{split}
\end{equation}
where $\adjusted{\decisionVar}$ and $\adjusted{\outcomeVar}$ are adjusted according to 
\begin{equation}
\begin{split}
     \bar{\decisionVar} &\coloneqq \decisionVar - \cov[\decisionVar,  \validAdjustmentVar]\cov[\validAdjustmentVar]^{-1}\validAdjustmentVar \\
\bar{\outcomeVar} &\coloneqq \outcomeVar - \cov[\outcomeVar,  \validAdjustmentVar]\cov[\validAdjustmentVar]^{-1}\validAdjustmentVar,
\end{split}
\end{equation}
where $\validAdjustmentVar \subseteq \adjustmentVar$ is a set of \emph{control variables} using the terminology in much of regression analysis. If this set were \emph{valid}, the noncausal association between $\decisionVar$ and  $\outcomeVar$ can be blocked. Then $\regCoefficient = \averageCausalEffect$ when the data-generating process is well-described by a linear model \citep{angrist_mostly_2009,pearl_causality:_2009}. See \citep[ch.~6.6]{peters_elements_2017} for a general definition of valid  control variables using structural causal models (\scm). Throughout the paper, we will assume that at least one valid subset of $\adjustmentVar$ exists but that it is \emph{unknown}. If a specified $\validAdjustmentVar$ contains invalid controls, the resulting inferences become erroneous as the following example illustrates.

\begin{figure*}[ht!]
     \centering
     \begin{subfigure}[b]{0.45\textwidth}
         \centering
         \tikzset{node distance=2.3cm}
         \begin{tikzpicture}[->,>=stealth',shorten >=1pt,auto,semithick]
  \node[circle,draw, minimum size=20pt] (Z2) {$\adjustmentVar_1$};
  \node[circle,draw, minimum size=20pt] (X) [above right of=Z2] {$\decisionVar$};
  \node[circle,draw, minimum size=20pt] (Y) [below right of=Z2] {$\outcomeVar$};
  \node[circle,draw, minimum size=20pt] (Z1) [below right of=X] {$\adjustmentVar_2$};

  \path (Z1) edge  (X)
  (Z1) edge   (Y)
  (X) edge   (Z2)
  (Y) edge   (Z2);
\end{tikzpicture}
         \caption{Underlying causal structure}
         \label{fig:collider_dag}
     \end{subfigure}
     \hfill
     \begin{subfigure}[b]{0.45\textwidth}
         \centering
        \pgfplotsset{every axis/.append style={width=\columnwidth, height=.6\columnwidth}}
         \begin{tikzpicture}[baseline]
    \pgfplotstableread[col sep=comma]{./data/4node_collider_summary.csv}{\datatable};
    \begin{semilogxaxis}[
        xlabel={No. of data points, $\nData$},
        ylabel={Parameter $\averageCausalEffect$},
        xmin=90,
        xmax=11000,
        legend style={
            font=\scriptsize, at={(0.5,1.10)},
            anchor=south,legend columns=-1},
    ]
        \addplot+ [ only marks, mark=*, mark size=1pt,
            error bars/.cd,
            y dir=both,
            y explicit] table [x=m_obs, y=ace_value, y error=q_ace_standard_error] {\datatable};
        \addplot+ [only marks, mark=*, mark size= 1 pt,
            error bars/.cd,
            y dir=both,
            y explicit] table [x=m_obs, y=ols_value, y error=q_ols_standard_error] {\datatable};
        \addplot [no markers] table [x=m_obs, y=ace_circ] {\datatable};
        \legend{{$\averageCausalEffectSet_{\confidenceLevel,\nData}$},{$\regCoefficientSet_{\confidenceLevel,\nData}$},{$\averageCausalEffectTarget$}}
    \end{semilogxaxis}
\end{tikzpicture}
         \caption{$95\%$-confidence intervals that aim to cover $\averageCausalEffectTarget$}
         \label{fig:collider_asymptotics}
     \end{subfigure}
     \hfill
     \caption{Using observational data \eqref{eq:dataGeneratingProcess} generated by a linear \scm{} based on (a), we aim to infer an unknown causal parameter $\averageCausalEffectTarget$ (further details in Section~\ref{subsection:correctly_identify_adjustment}).
     The causal structure is here unknown and using $\adjustmentVar=[\adjustmentVar_1 ,\, \adjustmentVar_2]$ as the control variables, the standard approach based on the ordinary least-squares (\OLS) method yields confidence interval $\regCoefficientSet_{\confidenceLevel,\nData}$ in (b). Since $\adjustmentVar$ is invalid
     due to the collider bias induced by $\adjustmentVar_1$, the inferences are erroneous.
     Below we develop an inference method that yields calibrated confidence intervals $\averageCausalEffectSet_{\confidenceLevel,\nData}$ when the causal structure in (a), and therefore a set of valid control variables, is unknown.}
\end{figure*}

\paragraph{Example: Invalid control variables} 
Consider a data-generating process with a causal structure as illustrated in Figure~\ref{fig:collider_dag}. Only $\adjustmentVar_2 \subset \adjustmentVar$ constitutes  a valid control variable, by blocking the noncausal association between $\decisionVar$ and $\outcomeVar$. Neither $\varnothing$ nor $\adjustmentVar_1$ are valid. If the causal structure is unknown or misspecified so that we use $\validAdjustmentVar = [\adjustmentVar_1, \adjustmentVar_2]^\T$ instead of $\adjustmentVar_2$, then inferring $\regCoefficient$ in equation \eqref{eq:partial_regression_coeff} will yield erroneous conclusions about the average causal effect, as shown in Figure~\ref{fig:collider_asymptotics}. We also illustrate an alternative methodology developed in this paper which, by contrast, does not require a correctly specified causal structure.

\paragraph{Contribution and related work} The contribution of this paper is the development of a confidence interval for the average causal effect that obviates the need to specify valid control variables, and we derive its statistical properties.

To decide the valid control variables among $\adjustmentVar$, typically requires the causal structure of the data-generating process. The problem of learning such structures from data, aka. causal discovery, has been studied over a few decades \citep{spirtes_causation_1993, pearl_causality:_2009,peters_elements_2017}. A central challenge of the field is to optimize model fitness over the discrete nature of graphs representing the causal structure. \citet{zheng_dags_2018} proposed a smooth characterization of directed acyclic graphs (\DAG{}) which enables conventional optimization methods to be used. See \citep{yu_daggnn_2019,ke_learning_2019,brouillard_diffrential_2020,zheng_learning_2020,kyono_2020} for applications and extentions of this methodology. 

Our method presented herein utilizes that characterization of \DAG{}s and builds upon the  framework of M-estimation. See e.g. the presentation in \citep[ch. 12]{wooldridge_econometric_2010} or \citet{vaart_m-_1998} for an introduction. 
When imposing \DAG{}-constraints, we find the need to extend the basic M-estimation framework. While the theory of constrained M-estimation has been approached before \citep{geyer_asymptotics_1994,shapiro_asymptotics_2000,andrews_estimation_1999, wang_asymptotics_1996}, we show that the assumptions needed do not hold due to the geometry of the  \DAG{} constraints. Moreover, alternative characterizations of \DAG{}s, presented in  \citet{wei_dags_2020}, would not remedy this problem.

Therefore we take a different approach, inspired by \citet{stoica_cramer-rao_1998}, to derive the large-sample properties of the proposed confidence interval and prove its asymptotic validity. Our theoretical results are corroborated by numerical experiments, which demonstrate the ability of the method to correctly infer average causal effects in linear \scm{}s without specifying valid control variables.

Lastly we emphasize that while our method builds upon insights from the causal discovery literature, its task is to infer the average causal effect and not a causal graph.

\section{Problem Formulation}

We begin by specifying the class models for the data generating process that we will consider and then proceed to define the target quantity that we seek to infer from data.

\subsection{Model Class for the Data-Generating Process}

To simplify the notation, we introduce the $\dNodes$-dimensional data vector $\semVector^\T = (\decisionVar ,\, \outcomeVar ,\, \adjustmentVar^\T)$. Suppose the data-generating process $p(\semVector)$ in \eqref{eq:dataGeneratingProcess} belongs to the class of linear \scm. That is, we can express the data vector as
\begin{align}
    \label{eq:sem:linear_sem}
    \semVector = \semCoeffMat^\T\semVector + \semNoise,
\end{align}
where is $\semNoise$ is zero-mean random variable with a diagonal covariance matrix $\semNoiseCovariance$. It is for simplicity assumed to be known here, although as we point out in Section~\ref{sec:result} this assumption can be relaxed to a certain degree.  We let $\semCoeffMat\in \R^{\dNodes\times\dNodes}$ have zeros on its diagonal. It can be interpreted as a weighted directed graph, by letting $W_{i,j}$ be the weight on the edge from node $i$ to node $j$.
The matrix $W^\T$ is sometimes referred to as the \emph{adjacency matrix} \citep{shimizu_directlingam_2011} or the \emph{autoregressive matrix} \citep{loh_high-dimensional_2014}.

The matrix $\semCoeffMat$ is unknown but has certain restrictions. For \scm{}s it is common to impose a \DAG{} structure on the graph specified by $\semCoeffMat$, since such structure significantly clarifies and simplifies any causal analysis of the model.
We will call $\semCoeffMat$ a `\DAG{}-matrix' if the directed graph of the matrix is acyclical.
When $\semCoeffMat$ is a \DAG{}-matrix, we can interpret the entry $\semCoeffMat_{i,j}$ as the expected increase in $\semVector_i$ for every unit increase in $\semVector_j$, holding all other variables constant.

\citet{zheng_dags_2018} introduced the function $h(W) \coloneqq \tr \exp (\semCoeffMat \hadamard \semCoeffMat)-\dNodes$, using the trace of the matrix exponential and the element-wise product $\hadamard$, and showed that
\[ \semCoeffMat \text{ is \DAG-matrix} \Leftrightarrow \hFun(\semCoeffMat) =0\]
To enable a tractable analysis below, we will also consider the set of all $\dagTolerance$-almost \DAG{}-matrices, defined as
\begin{align}
    \label{eq:def:semCoeffMatSet}
    \semCoeffMatSet_\dagTolerance = \left\{ \semCoeffMat\, \middle| \hFun(\semCoeffMat) \leq \dagTolerance  \text{ and } \diag(\semCoeffMat)=0 \right\}
\end{align}
Note that when $\dagTolerance=0$, the set  $\semCoeffMatSet_0$ is exactly the set of \DAG-matrices.
When $\dagTolerance > 0$, cycles are permitted but the magnitude of their effects are bounded. Below we will provide bounds on $\dagTolerance$ that enable a meaningful analysis of $\semCoeffMatSet_\dagTolerance$.

Given the data-generating process in \eqref{eq:sem:linear_sem}, we can define an \emph{interventional} distribution $\widetilde{p}(\semVector)$ with respect to the first variable $\decisionVar$ \citep{pearl_causality:_2009}:
Introduce $\mutilatingMatrix$, a matrix with ones on the diagonal, except the first element, which is zero, i.e.
\begin{equation}\label{eq:def:mutilatingMatrix}
    Z \in \R^{\dNodes\times\dNodes},\quad \mutilatingMatrix_{i,j} = \begin{cases} 1& \text{if } i=j>1\\ 0 &\text{else}\end{cases}
\end{equation}
Next, introduce a new random vector $\semInterventionNoise $, with the same statistical properties as $\semNoise$ in \eqref{eq:sem:linear_sem} for all components, but for its first component, and let $\semInterventionNoiseCovariance$ denote its diagonal covariance matrix.
The interventional distribution $\widetilde{p}(\semVector)$ is then specified by the model
\begin{align}
    \label{eq:mutilated_dgp}
    \semVector = \mutilatingMatrix \semCoeffMat^\T \semVector +  \semInterventionNoise ,
\end{align}
assuming that $(\eye - \mutilatingMatrix \semCoeffMat^\T )$ is full rank.

\subsection{Target Quantity}

For an interventional distribution given by \eqref{eq:mutilated_dgp}, we observe the following result.
\begin{lemma}
    \label{lemma:averageCausalEffectInSem}
    The average causal effect of $\decisionVar$ on $\outcomeVar$ in a linear \scm{} with interventional distribution $\widetilde{p}(\semVector)$ is
    \begin{equation}
        \label{eq:def:averageCausalEffectInSem}
        \averageCausalEffect(\semCoeffMat) = \frac{\covint[\decisionVar,  \outcomeVar]}{\varint[\decisionVar]} \equiv \left[ (\eye - \mutilatingMatrix \semCoeffMat^\T )^{-1} \right]_{2,1}
    \end{equation}
    where $\semCoeffMat$ is a (possibly non-\DAG{}) adjacency matrix.
\end{lemma}
The syntax $[.]_{2,1}$ refers to the second row and first column of a matrix. The proof is a direct computation and given in the supplementary material.

We are interested in computing the average causal effect
\begin{subequations}
    \begin{equation}\label{eq:def:averageCausalEffectTarget}
        \boxed{\averageCausalEffectTarget = \averageCausalEffect(\semCoeffOpt),}
    \end{equation}
    where $\semCoeffOpt$ is an $\dagTolerance$-almost \DAG{} adjacency matrix that optimally fits the observational data using the following criterion,
    \begin{equation}\label{eq:def:semcoefftrue}
        \semCoeffOpt \coloneqq \argmin_{\semCoeffMat \in \semCoeffMatSet_\dagTolerance} \; \E\Big[  \norm{\semNoiseCovariance^{-1/2}  (\eye - \semCoeffMat^\T)\semVector }^2 \Big]
    \end{equation}
\end{subequations}
\citet[corollary~8]{loh_high-dimensional_2014} show that if the observational distribution $p(\adjustmentVar)$ follows \eqref{eq:sem:linear_sem} and $\dagTolerance=0$, then \eqref{eq:def:semcoefftrue} correctly identifies the unknown matrix. Moreover, \citet[theorem~9]{loh_high-dimensional_2014} proves that identifiability is obtained even under limited misspecification of the entries in $\cov[\semNoise] = \semNoiseCovariance$. Thus the target quantity $\averageCausalEffectTarget$ is defined as the average causal effect of the optimally fitted linear \scm{} and requires no further distributional assumptions.

Our task is to construct a confidence interval $\averageCausalEffectSet_{\alpha,\nData}$, that is using $\nData$ data points, and has a coverage probability $1-\confidenceLevel$ for the quantity $\averageCausalEffectTarget$.

\section{Results} 
\label{sec:result}

We present the results in this paper in two parts. First, we present the confidence interval for $\averageCausalEffectTarget$ with an asymptotically valid coverage probability (Theorem~\ref{thm:cofidence_interval_for_ace}). This uses a general result of equality-constrained M-estimation, which we subsequently present (Theorem~\ref{thm:constrained_m_estimation}, Corollary~\ref{cor:constrained_m_est}).

\subsection{Derivation of Confidence Interval}

Using the empirical average operator $\En$, we define the empirical analog of \eqref{eq:def:semcoefftrue} as
\begin{align}
    \label{eq:def:Wn}
    \semCoeffEstN \coloneqq \argmin_{\semCoeffMat \in \semCoeffMatSet_\dagTolerance} \;
    \En\Big[ \norm{\semNoiseCovariance^{-1/2}  \left(\eye - \semCoeffMat^\T \right)\semVector }^2 \Big]
\end{align}
Using $\semCoeffEstN$ and \eqref{eq:def:averageCausalEffectInSem} yields a point estimate of $\averageCausalEffectTarget$:
\begin{align}
    \label{eq:def:gamma_n}
    \averageCausalEffectEstN  \coloneqq \averageCausalEffect(\semCoeffEstN)
\end{align}

For notational simplicity, we reparameterize $\semCoeffMat$, which contains zeros along the diagonal,
by $\vecop(\semCoeffMat) = \mEstParametrization \mEstParameter$, where $L$ is a $\dNodes^2  \times \dNodes(\dNodes-1)$ matrix constructed using a $\dNodes^2  \times \dNodes^2$ identity matrix removing columns $d(k-1)+k$ for $k=1,2,\dots, d$. Using this parametrization, we formulate the loss function
\begin{align}
    \label{eq:def:mest:loss}
    \footnotesize{
        \mEstLoss_\mEstParameter(\semVector) \coloneqq (\mEstParametrization\mEstParameter - \vecop(I))^\T
        \left[\semNoiseCovariance^{-1} \kronecker  [\semVector\semVector^\T] \right]
        (\mEstParametrization\mEstParameter - \vecop(I))
    }
\end{align}
using the Kronecker product $\kronecker$, and we write
\begin{align}
    \label{eq:def:theta0}
    \mEstParameterTrue & = \argmin_{
        h(\matop(\mEstParametrization\mEstParameter)) \leq \dagTolerance }
    \E[\mEstLoss_{\mEstParameter}(\semVector) ] \\
    \label{eq:def:thetan}
    \mEstParameterEstN & = \argmin_{
        h(\matop(\mEstParametrization\mEstParameter)) \leq \dagTolerance }
    \En[\mEstLoss_{\mEstParameter}(\semVector) ]
\end{align}
equivalently to \eqref{eq:def:semcoefftrue} and $\eqref{eq:def:Wn}$.


While setting $\dagTolerance=0$ yields exact \DAG-matrices, it also renders the problem ill-suited for inference. The set $\semCoeffMatSet_0$ is nonconvex, has  an empty interior, and constraint qualification does not hold (see Lemma~\ref{lemma:nonconvexWset} in the supplementary material). Therefore, convex optimization methods, barrier methods, and any method based on first-order optimality will be invalid. Asymptotic analysis of M-estimation typically requires convexity of the tangent cone at the optimum, and that the optimal point is stationary even under the unconstrained formulation \citep{geyer_asymptotics_1994,shapiro_asymptotics_2000}, but neither of these assumptions are fulfilled at most points in the set $\semCoeffMatSet_0$. To provide a tractable analysis, we consider $\dagTolerance>0$ below and expect almost-identification when $\dagTolerance$ is small. We start with a technical lemma.
\begin{lemma}
    \label{lemma:unconstrained_minimization}
    The minimizer $\mEstParameterTrue$ in \eqref{eq:def:theta0} is bounded. If it is also unique, then there is a value of $\dagToleranceMax$ such that the minimum is obtained at the boundary $\hFun(\matop(\mEstParametrization\mEstParameterTrue)) = \dagTolerance$ for all $\dagTolerance < \dagToleranceMax$.
\end{lemma}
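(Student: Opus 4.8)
The plan is to reduce the whole statement to convexity of the weighted least-squares objective in $\mEstParameter$ together with continuity of $\hFun$. First I would record that, taking expectations in \eqref{eq:def:mest:loss} and using that $\semNoiseCovariance$ is deterministic, the map $F(\mEstParameter)\triangleq\E[\mEstLoss_\mEstParameter(\semVector)]=(\mEstParametrization\mEstParameter-\vecop(\eye))^\T\big(\semNoiseCovariance^{-1}\kronecker\E[\semVector\semVector^\T]\big)(\mEstParametrization\mEstParameter-\vecop(\eye))$ is a convex quadratic in $\mEstParameter$, since its Hessian $2\mEstParametrization^\T\big(\semNoiseCovariance^{-1}\kronecker\E[\semVector\semVector^\T]\big)\mEstParametrization$ is positive semidefinite (because $\semNoiseCovariance^{-1}\kronecker\semVector\semVector^\T\succeq0$ pointwise). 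I would read the hypothesis that $\mEstParameterTrue$ is unique and bounded as positive-definiteness of that Hessian (guaranteed, e.g., when $\E[\semVector\semVector^\T]\succ0$, as $\mEstParametrization$ has full column rank); under it $F$ is strictly convex and coercive, so it has a unique unconstrained minimizer $\mEstParameter_{\mathrm u}$ over $\R^{\dNodes(\dNodes-1)}$ and a unique minimizer over every closed set. I would then simply set $\dagToleranceMax\triangleq\hFun(\matop(\mEstParametrization\mEstParameter_{\mathrm u}))$.

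The core argument is then short. Fix any $\dagTolerance$ with $0\le\dagTolerance<\dagToleranceMax$. The feasible set $\{\mEstParameter:\hFun(\matop(\mEstParametrization\mEstParameter))\le\dagTolerance\}$ is closed, since $\hFun\circ\matop\circ\mEstParametrization$ is continuous, and $F$ is coercive, so the constrained minimizer $\mEstParameterTrue=\mEstParameterTrue(\dagTolerance)$ exists and is unique. Suppose for contradiction that the constraint is slack, i.e.\ $\hFun(\matop(\mEstParametrization\mEstParameterTrue))<\dagTolerance$. By continuity there is an open ball about $\mEstParameterTrue$ inside the feasible set, on which $\mEstParameterTrue$ minimizes $F$; a local minimizer of a convex function is a global minimizer, so $\mEstParameterTrue$ minimizes $F$ over all of $\R^{\dNodes(\dNodes-1)}$, hence $\mEstParameterTrue=\mEstParameter_{\mathrm u}$ by uniqueness. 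But then $\hFun(\matop(\mEstParametrization\mEstParameterTrue))=\dagToleranceMax>\dagTolerance$, contradicting slackness. Since feasibility already gives $\hFun(\matop(\mEstParametrization\mEstParameterTrue))\le\dagTolerance$, we conclude $\hFun(\matop(\mEstParametrization\mEstParameterTrue))=\dagTolerance$, i.e.\ the minimum is attained on the boundary.

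The one place needing real care is showing $\dagToleranceMax>0$, i.e.\ that the unconstrained weighted least-squares fit $\matop(\mEstParametrization\mEstParameter_{\mathrm u})$ is not itself a \DAG{}-matrix. I would argue this from the normal equations: each column of $\matop(\mEstParametrization\mEstParameter_{\mathrm u})$ is the coefficient vector of the full linear regression of the corresponding coordinate of $\semVector$ on the remaining coordinates, and for a nonsingular $\E[\semVector\semVector^\T]$ these are generically nonzero in every off-diagonal entry, so the induced digraph contains $2$-cycles and $\hFun>0$; in the degenerate complementary case $\dagToleranceMax=0$ and the statement is vacuously true (the minimum already sits at $\hFun=0$), so nothing is lost. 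I expect this genericity step to be the main obstacle. By contrast, the remaining ingredients (closedness of the feasible set, continuity of $\hFun$, convexity of $F$) are routine — though I would flag that closedness does \emph{not} give compactness, as the cone of \DAG-directions is unbounded inside $\semCoeffMatSet_\dagTolerance$, which is exactly why the boundedness/coercivity hypothesis is needed for the constrained minimizer to be attained at all.
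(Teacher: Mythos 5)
Your proposal is correct and follows essentially the same route as the paper: both define $\dagToleranceMax$ as the value of $\hFun$ at the unconstrained weighted least-squares minimizer (the paper writes it explicitly via the pseudoinverse, $\mEstParameter_\star = (\mathcal Q^{1/2}\mEstParametrization)^\dagger \mathcal Q^{1/2}\vecop(\eye)$) and then argue that for $\dagTolerance < \dagToleranceMax$ the constraint must bind because the only candidate interior minimizer of the positive definite quadratic is infeasible. The only cosmetic difference is that you invoke the local-implies-global property of the convex objective where the paper adds a redundant norm bound to get a compact feasible set and argues via absence of stationary points; your added worry about whether $\dagToleranceMax>0$ is a point the paper also leaves unaddressed and, as you note, does not affect the lemma as stated.
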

\begin{proof}
    First, assume that the mimimizer of \eqref{eq:def:theta0} is not bounded. In that case, there is a sequence of feasible points $t_n$ such that $\norm{t_n} \to \infty$, and $\E [\mEstLoss_{t_n}(\semVector)] $ is decreasing. This is not possible, since $\mEstLoss_t(\semVector)$ is a positive definite quadratic in $t$. We have established the boundedness $\norm{\mEstParameterTrue} < B$, for some $B$.

    Let $\qMatrix = \semNoiseCovariance^{-1} \kronecker  \E[\semVector\semVector^\T]$, i.e. a Kronecker product of two positive definite matrices and it follows that $\qMatrix$ is positive definite. Then the objective function of \eqref{eq:def:theta0} is a positive definite quadratic with a global minimum given by the stationary point  $ \mEstParameter_\star \eqqcolon (\qMatrix{}^{1/2} L )^\dagger \qMatrix{}^{1/2} \vecop(\eye ) $ where $^\dagger$ denotes the Moore-Penrose inverse.
    When $\dagTolerance = \infty$, then $\mEstParameter_\star$ is a feasible point to the minimization problem in \eqref{eq:def:theta0}.

    Define $\dagToleranceMax = \hFun( \matop(L\mEstParameter_\star))$ and consider \eqref{eq:def:theta0} for any $\dagTolerance \in (0,\,\dagToleranceMax)$.
    Observe that $\left\{ \mEstParameter \,\middle|\, \norm{\mEstParameter}\leq B \text{ and } \hFun( \matop(L\mEstParameter)) \leq \dagTolerance \right\}$ is compact, the objective function has no stationary points on the feasible set, and $\norm{\mEstParameterTrue}< B$. Conclude that $h(\matop(\mEstParametrization\mEstParameterTrue)) = \dagTolerance$.
\end{proof}

\begin{lemma}
    \label{lemma:asymptotic_normal_mestparam}
    Assume the solution to \eqref{eq:def:theta0} is unique, and that $\dagTolerance < \dagToleranceMax$ as in Lemma~\ref{lemma:unconstrained_minimization}.
    Then the asymptotic distribution of $\mEstParameterEstN$ can be described by
    \begin{align}
        \label{eq:asymptotics_for_mEstParamEst}
        \sqrt{\nData}\mEstCovarianceN^{-1/2}(\mEstParameterEstN - \mEstParameterTrue) \convd \normal (0,\eye)
    \end{align}
    The estimated covariance of the estimator is defined as $\mEstCovarianceN = \Kn^{-1}\PiN\Jn\PiN\Kn^{-1}$, where
    $ \Kn = \mEstParametrization^\T   \left[ \semNoiseCovariance^{-1} \kronecker \En \left[  \semVector \semVector^\T \right] \right]\mEstParametrization$, $\PiN$  is a projection matrix with respect to the orthogonal complement of $ \nabla_{\mEstParameter} \hFun(\matop(L\mEstParameterEstN))$
    and  $\Jn = \mEstParametrization^\T \tilde \Jn \mEstParametrization $.

    We may compute $\PiN = \eye - (qq^\T)/(q^{\T}q)$ and $q = \mEstParametrization^\T\vecop(2\semCoeffEstN \hadamard (\exp [ \semCoeffEstN\hadamard \semCoeffEstN ])^\T)$.
    Furthermore, the matrix $\tilde \Jn$ has the expression
    \begin{multline}
        \label{eq:mest:score_variance}
        (\tilde \Jn )_{d(j-1)+i,d(l-1)+k} =
        \sum_{q,r,o,p=1}^{\dNodes} \Big\{
        \big( \En\left[ \semVector_i \semVector_q \semVector_o \semVector_k \right]- \\
        \En\left[ \semVector_i \semVector_q \right]\En\left[ \semVector_o \semVector_k \right]\big)
        \semNoiseCovariance^{-1}_{j,r}
        \semNoiseCovariance^{-1}_{p,l}
        (\semCoeffMat-\eye)_{q,r}
        (\semCoeffMat-\eye)_{o,p}
        \Big\}
    \end{multline}
\end{lemma}
\begin{proof}
    By consistency of M-estimation, \eqref{eq:def:thetan} will be a consistent estimator for \eqref{eq:def:theta0}. Adding the redundant $\norm{\mEstParameter} \leq B$-constraint in Lemma~\ref{lemma:unconstrained_minimization} makes the feasible set compact and thus fulfills the technical conditions \citep[Theorem 12.2]{wooldridge_econometric_2010}.

    By Lemma~\ref{lemma:unconstrained_minimization}, we know that the minimum will be obtained at the boundary, in the limit $\nData \to \infty$. We can therefore impose equality constraints in the minimization:
    \begin{align}
        \mEstParameterEstN = \argmin_{
            h(\matop(\mEstParametrization\mEstParameter)) = \dagTolerance }
        \En[\mEstLoss_{\mEstParameter}(\semVector) ]
    \end{align}
    Now apply Corollary~\ref{cor:constrained_m_est} derived below. It states the formula for confidence intervals under equality-constrained M-estimation using plug-in estimators of data covariance and cross-moments. The derivation of the expressions for $\tilde \Jn$, $\Kn$ and $\PiN$ from \eqref{eq:def:mest:loss} are direct computations presented in the supplementary material as Lemma~\ref{lemma:mest:symbols}. Technical conditions are presented in Lemma~\ref{lemma:mest:technicalities}.
\end{proof}

We can now state our main result for inferring the average causal effect $\averageCausalEffectTarget$.
\begin{theorem}
    \label{thm:cofidence_interval_for_ace}
    The confidence interval
    \begin{align}
        \label{eq:confidence_set_for_ace}
        \averageCausalEffectSet_{\confidenceLevel, \nData}  = \left\{ \averageCausalEffect \in \R \middle|\frac{1}{\nData} \frac{ (\averageCausalEffect - \averageCausalEffectEstN)^2}{  \nabla \averageCausalEffect(\mEstParameterEstN)^\T\mEstCovarianceN \nabla \averageCausalEffect(\mEstParameterEstN))}  \leq \chi^2_{1,\confidenceLevel}  \right\}
        \quad
    \end{align}
    has asymptotic coverage probability
    \begin{equation}
        \lim_{\nData \rightarrow \infty} \: \Prob ( \averageCausalEffectTarget \in  \averageCausalEffectSet_{\alpha, \nData}  ) = 1 - \alpha,
    \end{equation}
    where $\chi^2_{1,\confidenceLevel}$ denotes the $(1-\alpha)$ quantile of the chi-squared distribution with 1 degree of freedom.
    \label{thm:confidence_set_for_ace}
\end{theorem}
\begin{proof}

    Define $\averageCausalEffect(\mEstParameter)$ as the value of $\averageCausalEffect(\matop(\mEstParametrization \mEstParameter))$ in \eqref{eq:def:averageCausalEffectInSem}.

    The gradient $\nabla \averageCausalEffect(\mEstParameterEstN)$ may be computed on closed form by differentiating \eqref{eq:def:averageCausalEffectInSem}, obtaining
    \begin{align}
        \label{eq:gradient_of_ace}
        \left[ \nabla _\mEstParameter \averageCausalEffect(\mEstParameter) \right]_k = -\left( \left[\semScaleMatrix \mutilatingMatrix \kronecker \eye \right] \mEstParametrization \right)_{d+1,k}
    \end{align}
    where $\semScaleMatrix = (\eye-\mutilatingMatrix \semCoeffMat)^{-1}$. The computation is mostly keeping track of indices, and presented in supplementary materials as Lemma~\ref{lemma:gradient_of_ace}.
    Using the delta method with equation \eqref{eq:gradient_of_ace} together with Lemma~\ref{lemma:asymptotic_normal_mestparam}, we establish asymptotic normality. Form the Wald statistic for $\averageCausalEffectEstN$, and we may finally define a confidence interval $\averageCausalEffectSet_{\alpha, \nData}$.
\end{proof}

\subsection{M-estimation Asymptotics under Equality Constraints}\label{subsection:mestimation_with_constraints}

Next we derive a general result for the asymptotics of of equality-constrained M-estimation. The key observation is borrowed from \citet{stoica_cramer-rao_1998}: that we can project onto the (generalized) score onto the active constraints. We apply this insight to the more general M-estimation framework and derive complete asymptotic distribution of equality-constrained M-estimators.

In this section~\ref{subsection:mestimation_with_constraints} the function $\mEstLoss$ is not necessarily the same function as defined in \eqref{eq:def:mest:loss} but we use the same symbol to ease the mapping between the general result and its application.

\begin{theorem} \label{thm:constrained_m_estimation}
    Assume that technical conditions for consistency of M-estimation holds \citep[Theorem 12.2]{wooldridge_econometric_2010}), as well as
    \begin{itemize}
        \item The loss function $\mEstLoss_\mEstParameter(\semVector)$ is two times continously diffrentiable in $\semVector$.
        \item $\mEstParameterSet \coloneqq \{ \mEstParameter \in \mathbb R^p \mid \mEstConstrint(\mEstParameter)=0\}$ for some vector-valued constraint function $\mEstConstrint$ such that $\mEstParameterSet $ is bounded.
        \item The Jacobian matrix $\nabla \mEstConstrint( \mEstParameterEstN)$ has full rank for all $n$.
        \item $\En \left[ \nabla^2 \mEstLoss_{\mEstParameter}(v)\right]$ is invertible for all $\mEstParameter$.
        \item $\mEstParameterTrue$ is the unique minimizer of $ \E[\mEstLoss_\mEstParameter(\semVector)]$
    \end{itemize}

    Introduce  the definitions
    $\Jtrue \coloneqq \cov[\nabla \mEstLoss_{\mEstParameterTrue}(\semVector)]$,
    $\Ktrue \coloneqq  \E[ \nabla^2 \mEstLoss_{\mEstParameterTrue}(\semVector)]$ and
    $\PiTrue$ is an orthogonal projector in the complement of the range of the jacobian $\nabla \mEstConstrint( \mEstParameterTrue)$.
    Then we can establish the convergence
    \[\sqrt{n} ( \mEstParameterEstN - \mEstParameterTrue) \convd \normal(0,\Ktrue^{-1}\PiTrue\Jtrue\PiTrue\Ktrue^{-1}).\]
\end{theorem}

\begin{proof}

    Uniform weak law of large numbers holds, and $\mEstParameterSet$ must be compact since bounded and closed, so we have that $\mEstParameterTrue$ is consistently estimated by $\mEstParameterEstN$

    Let $\Qn$ be a matrix whose orthonormal columns spans the range of $\nablatheta \mEstConstrint( \mEstParameterEstN)$ (as in e.g. QR factorization).
    Construct an orthogonal matrix $[\Qn \, \Un]$.
    Now, $\Qn$ is a ON basis for the normal of the feasible set $\mEstParameterSet$, and $\Un$ is a ON basis for the tangent cone of $\mEstParameterSet$ as $\mEstParameterEstN$.

    Begin by a mean-value expansion of $\En \left[ \nablatheta \mEstLoss_{ \mEstParameterEstN}(\semVector)\right]$.
    \begin{align}
        \En[ \nablatheta \mEstLoss_{ \mEstParameterEstN}(\semVector)] & = \En[ \nablatheta \mEstLoss_{\mEstParameterTrue}(\semVector)] + \En[ \nablatheta^2  \mEstLoss_{\tilde \mEstParameter}(\semVector)] ( \mEstParameterEstN - \mEstParameterTrue)
    \end{align}
    We have that $I=[\Qn\,\Un]\begin{bmatrix}\Qn^\T\ \\ \Un^\T\end{bmatrix}$
    \begin{align}
         & [\Qn\,\Un]\begin{bmatrix}\Qn^\T \\ \Un^\T\end{bmatrix}\En [\nablatheta \mEstLoss_{ \mEstParameterEstN}(\semVector)]                                                                                                                  \\
         & = [\Qn\,\Un]\begin{bmatrix}\Qn^\T \\ \Un^\T\end{bmatrix}  \En [\nablatheta\mEstLoss_{\mEstParameterTrue}(\semVector)] +  \En [\nablatheta^2\mEstLoss_{\tilde \mEstParameter}(\semVector)] ( \mEstParameterEstN - \mEstParameterTrue)
    \end{align}
    By definition $\Un^\T\nablatheta g( \mEstParameterEstN)=0$, and from first order optimality conditions $\nablatheta \mEstLoss_{ \mEstParameterEstN}$ is in the range of $\nablatheta g( \mEstParameterEstN)$, so $\Un^\T\nablatheta \mEstLoss_{ \mEstParameterEstN} =0$.

    Rearranging, and using the assumption of invertibility of $\En [\nablatheta^2 \mEstLoss_{\tilde \mEstParameter}(\semVector)]$, we get
    \begin{align}
        \label{eq:mEst:before_any_limit}
         & ( \mEstParameterEstN - \mEstParameterTrue)=                                                                            \\
         & \En \left[\nablatheta^2 \mEstLoss_{\tilde \mEstParameter}(\semVector)\right]^{-1} [\Qn\,\Un]\begin{bmatrix}\Qn^\T \left(\En \left[ \nablatheta \mEstLoss_{ \mEstParameterEstN}(\semVector) -  \nablatheta \mEstLoss_{\mEstParameterTrue}(\semVector)\right] \right)  \\  -\Un^\T \En \left[ \nablatheta \mEstLoss_{\mEstParameterTrue}(\semVector)\right] \end{bmatrix}
    \end{align}
    Next, we will analyze a certain subexpression separately. Introduce $\PiTrue = \Utrue\Utrue^\T$ and $\PiN = \Un\Un^\T$.
    \begin{align}
         & \sqrt{n} \PiN \En \left[ \nablatheta \mEstLoss_{\mEstParameterTrue}(\semVector)\right] =                                                                                                                                                                                       \\
         & \quad \PiN \sqrt{n} \left( \En \left[ \nablatheta \mEstLoss_{\mEstParameterTrue}(\semVector)\right] - \E \left[ \nablatheta \mEstLoss_{\mEstParameterTrue}(\semVector)\right] \right) + \PiN \sqrt{n}  \E \left[ \nablatheta \mEstLoss_{\mEstParameterTrue}(\semVector)\right]
    \end{align}
    The first term converges to $\normal(0,\PiTrue\Jtrue\PiTrue)$ in distribution. The second term converges to zero in probability,  so
    \begin{align}
        \sqrt{n} \PiN \En \left[ \nablatheta \mEstLoss_{\mEstParameterTrue}(\semVector)\right] \convd \normal(0,\PiTrue\Jtrue\PiTrue)
    \end{align}
    Finally, we can take the limit of equation \eqref{eq:mEst:before_any_limit}.
    {\footnotesize\begin{multline}
        \sqrt{n} ( \mEstParameterEstN - \mEstParameterTrue) = \\
        \sqrt{n}\underbrace{[\En \left[ \nablatheta^2 \mEstLoss_{\tilde \mEstParameter}(\semVector)\right]^{-1}}_{\convp K^{-1}}  \underbrace{\Qn\Qn^\T}_{\convp \Qtrue\Qtrue^\T} \underbrace{\left(\En \left(\nablatheta \mEstLoss_{ \mEstParameterEstN}(\semVector)\right] - \En \left[\nablatheta \mEstLoss_{\mEstParameterTrue}(\semVector)\right] \right)}_{\convp 0} \\
        -  \underbrace{[\En \left[ \nablatheta^2 \mEstLoss_{\tilde \mEstParameter}(\semVector)\right]^{-1}}_{\convp \Ktrue^{-1}} \underbrace{ \sqrt{n} \PiN \En \left[ \nablatheta \mEstLoss_{\mEstParameterTrue}(\semVector)\right]}_{\convd \normal(0,\PiTrue\Jtrue\PiTrue)}
    \end{multline}
    }

    For all terms converging in probability we have been using the uniform weak law of large numbers, so we rely on compactness of $\mEstParameterSet$, and the suitable smoothness of the functions depending on $\semVector$.
    We need, for example, the continuity of matrix inversion, QR factorization and orthogonal complements. W
    e use Slutskys theorem to multiply the terms.

    Finally we see $\sqrt{n} ( \mEstParameterEstN - \mEstParameterTrue) \convd \normal(0,\Ktrue^{-1}\PiTrue\Jtrue\PiTrue\Ktrue^{-1})$
\end{proof}

\begin{corollary} \label{cor:constrained_m_est}
    The asymptotic distribution of Theorem~\ref{thm:constrained_m_estimation} can be reformulated by standardizing it, and plugging in estimates (e.g. $\Kn$) in the place of the population optimal expressions (e.g. $\Ktrue$).
    \[\sqrt{n}\mEstCovarianceN^{-1/2} ( \mEstParameterEstN - \mEstParameterTrue) \convd \normal(0,\eye).\]
    with the introduction of
    \[ \mEstCovarianceN \coloneqq \Kn^{-1}\PiN\Jn\PiN\Kn^{-1}  \]
    \[\Kn\coloneqq \En \left[ \nablatheta^2 \mEstLoss_{\mEstParameterEstN} (\semVector)\right]\]
    \[ \Jn \coloneqq \En[\nabla \mEstLoss_{\mEstParameterEstN}(v)\nabla \mEstLoss_{\mEstParameterEstN}(v)^\T]
        -\En[\nabla \mEstLoss_{\mEstParameterEstN}(v)]\En[\nabla \mEstLoss_{\mEstParameterEstN}(v)]^\T\]
\end{corollary}
\begin{proof}
    This follows from the consistency of plug-in-estimators \citep[Theorem 12.2]{wooldridge_econometric_2010}.
\end{proof}

\section{Numerical Illustrations}
\label{sec:numerics}

In the following experiments, data was generated using a linear \scm{} \eqref{eq:sem:linear_sem} with a matrix $\semCoeffMat$ that is either fixed or random. For random \DAG{}-matrices, we follow \citet[section 4.1]{yu_daggnn_2019}: Let $\dNodes$ be the number of nodes in a \scm{}. Let $k$ be the expected number of edges in a randomly generated \DAG{}. Let $M$ be a random strictly subtriangular matrix where entries are drawn $\operatorname{Bernoulli}(2k/(\dNodes-1))$. Let $P$ be a random permutation matrix. Let $C$ be uniformly drawn from the interval $[0.5,2]$, and set $\semCoeffMat=P^\T (C\hadamard M) P$.

The random vector $\semNoise$ in \eqref{eq:sem:linear_sem} has elements with unit variance and are drawn independently as either Normal(0,1), Exp(1) or Gumbel(0,$6/\pi^2$)). Data was also centered before any other processing.

Throughout all runs, the nominal miscoverage level was set to $\confidenceLevel=5\%$ and $\dagTolerance=10^{-7}$.

\begin{remark}
     In the supplementary material, we study deviations from the linear data model, in which case the average causal effect \eqref{eq:def:averageCausalEffectTarget} of the optimal linear model is still defined.
\end{remark}
\begin{remark}
     In all cases when the data generator is a linear \scm{} with Gaussian noise, we apply Isserlis' theorem to equation \eqref{eq:mest:score_variance}, $\En\left[ \semVector_i \semVector_q \semVector_o \semVector_k \right]-
          \En\left[ \semVector_i \semVector_q \right]\En\left[ \semVector_o \semVector_k \right] =
          \En\left[ \semVector_i \semVector_o \right]\En\left[ \semVector_q \semVector_k \right]
          +\En\left[ \semVector_i \semVector_k \right]\En\left[ \semVector_q \semVector_o \right]
     $. This reduction is especially helpful in high dimensions, when $\dNodes$ is large.
\end{remark}

\subsection{Numerical Search Method} \label{subsection:numerical_search}
In the examples below, we construct the confidence interval \eqref{eq:confidence_set_for_ace} by numerically solving problem \eqref{eq:def:thetan}. Here we use the augmented Lagrangian method \citep{nocedal_numerical_2006}, but other search methods are possible as well.

We define the augmented Lagrangian and the equality converted constraint as
\begin{align}
     \label{eq:def:augLag}
     \augLag(\mEstParameter,\augLagSlack,\augLagLagMul, \augLagPen) = \En \left[ \mEstLoss_\mEstParameter(\semVector)\right] + \augLagLagMul \augLagContraint(\mEstParameter, \augLagSlack) + \frac{\augLagPen}{2}\augLagContraint(\mEstParameter, \augLagSlack)^2
\end{align}
\[\augLagContraint(\mEstParameter, \augLagSlack) = \hFun(\matop(L\mEstParameter))+\augLagSlack^2-\dagTolerance \]

The method alternates between the minimization over primal variables ($\mEstParameter$,$\augLagSlack$) and maximization over dual variables ($\augLagLagMul$), starting from a few initialization points, as explicated in  Algorithm~\ref{algo:augLag}.

\begin{algorithm}[ht!]
     \DontPrintSemicolon
     \KwIn{$\mEstParameter^0$,$\augLagSlack^0$,$\augLagPen^0$, $\augLagLagMul^0$, $\augLagMinImprovement$, $\augLagPenMul$,$\augLag$,$\augLagConstraintTol$,$\augLagPenMax$,$\augLagContraint$}
     \KwOut{$\mEstParameterEstN$ }
     $k=0$\;
     \nl\While{
          $\augLagContraint(\mEstParameter^{\augLagIter}, \augLagSlack^{\augLagIter}) > \augLagConstraintTol$
          \textbf{ and }
          $\augLagPen < \augLagPenMax$
          \label{algo:augLag:stopCondition}
     }{
          \nl$\mEstParameter^{k+1},\augLagSlack^{k+1} = \argmin_{\mEstParameter,\augLagSlack} \augLag(\mEstParameter,\augLagSlack,\augLagLagMul^\augLagIter, \augLagPen^\augLagIter)$\; \label{algo:augLag:inner_problem}
          $\augLagLagMul^{\augLagIter+1} = \augLagLagMul^{\augLagIter} + \augLagPen^\augLagIter \augLagContraint(\mEstParameter^{\augLagIter+1}, \augLagSlack^{\augLagIter+1})$ \;
          \eIf{
               $\augLagContraint(\mEstParameter^{\augLagIter+1}, \augLagSlack^{\augLagIter+1}) >  \augLagMinImprovement \augLagContraint(\mEstParameter^{\augLagIter}, \augLagSlack^{\augLagIter})$
          }{
               $\augLagPen^{\augLagIter+1} =\augLagPenMul \augLagPen^{\augLagIter}$ \;
          }{
               $\augLagPen^{\augLagIter+1} = \augLagPen^{\augLagIter}$\;
          }
          $k=k+1$\;
     }
     \KwRet{$\mEstParameterEstN=\mEstParameter^{\augLagIter+1}$}\;
     \caption{Augmented Lagrangian Method}
     \label{algo:augLag}
\end{algorithm}

The minimization problem on line~\ref{algo:augLag:inner_problem} is solved via the L-BFGS-B-implementation in the python library \texttt{scipy.optimize}, which in turn utilizes the 3.0 version of the FORTRAN library of \citet{zhu_algorithm_1997}. Since this is a local minimizer, we use the previous optimal primal variables $\mEstParameter^{k},\augLagSlack^{k}$ as the starting point.

The parameters have default values set to $\mEstParameter^0=0$, $\augLagSlack^0=10$, $\augLagPen^0=1$, $\augLagLagMul^0=0$, $\augLagMinImprovement=1/4$, $\augLagPenMul=2$, $\augLagConstraintTol=10^{-12}$, $\augLagPenMax = 10^{20}$. Note that $\augLagConstraintTol$ must be significantly smaller than $\dagTolerance$, which in turn should be smaller than $\dagToleranceMax$. Thus it is advisable to verify that the choice of $\augLagConstraintTol$ is sufficiently small in a given problem. The threshold $\augLagPenMax$ is introduced for numerical stability.

The augmented Lagrangian method is guaranteed to find a local minimizer $\mEstParameterEstN$, under a certain set of assumptions \citep[Theorem 17.6]{nocedal_numerical_2006}. One of these is constraint qualification at the minimizer, in this case demanding $\nabla \augLagContraint(\mEstParameter_*,\augLagSlack_*) \neq 0$ at the optimal primal variables $\mEstParameter_*,\augLagSlack_*$. For $\dagTolerance=0$ this do not hold, but it does so for $\dagTolerance > 0$, see Lemma~\ref{lemma:nonconvexWset} in the supplementary material for a proof. Finding the minimum for $\dagTolerance \to 0$ will thus require $\augLagPen \to \infty$, and we have introduced the stop condition $\augLagPenMax$ on line~\ref{algo:augLag:stopCondition} for practical reasons.

To compute $\averageCausalEffectTarget$ we replace $\En[..]$ in \eqref{eq:def:augLag} with $\E[..]$, which has a closed-form expression.

\subsection{Baseline Comparison}
\label{subsection:correctly_identify_adjustment}
\label{subsection:same_as_ols}

We first compare the proposed confidence interval $\averageCausalEffectSet_{\nData,\confidenceLevel}$ in \eqref{eq:confidence_set_for_ace} with a standard \OLS-based confidence interval $\regCoefficientSet_{\nData,\confidenceLevel}$ for \eqref{eq:partial_regression_coeff} that is computed using HC0 standard errors \citep{wooldridge_econometric_2010}. To use \OLS{} we must specify a set of control variables, which we take to be $\adjustmentVar$. When this set is valid, we expect $\averageCausalEffectSet_{\nData,\confidenceLevel}$ and $\regCoefficientSet_{\nData,\confidenceLevel}$ to be similar. When the set is invalid, we expect them to diverge.

We use the linear Gaussian data model with the matrix in \eqref{eq:sem:linear_sem} set to be either
\[\semCoeffMat '=\begin{bmatrix}
          0 & 0 & 1 & 0 \\
          0 & 0 & 1 & 0 \\
          0 & 0 & 0 & 0 \\
          1 & 1 & 0 & 0 \\
     \end{bmatrix} \text{ or } \semCoeffMat '' = \begin{bmatrix} 0& 0.4& 0\\ 0& 0& 0 \\ 0.7 & 0.2& 0 \end{bmatrix} \]

The graph of $\semCoeffMat '$ is illustrated in Figure~\ref{fig:collider_dag}, while Figure~\ref{fig:collider_asymptotics} demonstrates the ability of $\averageCausalEffectSet_{\nData,\confidenceLevel}$ to correctly infer $\averageCausalEffectTarget$ without specifying a set of control variables. By contrast, $\regCoefficientSet_{\nData,\confidenceLevel}$ is clearly biased from incorrectly controlling for the collider $\adjustmentVar_1$.

Corresponding results for $\semCoeffMat ''$ are shown in Figure~\ref{fig:fork_asymptotics}. As expected, the resulting intervals $\averageCausalEffectSet_{\nData,\confidenceLevel}$ and $\regCoefficientSet_{\nData,\confidenceLevel}$ are virtually identical since $\adjustmentVar$ constitutes a valid set of control variables.
\begin{figure}[htbp]
     \centering
     \begin{tikzpicture}[baseline]
    \pgfplotstableread[col sep=comma]{./data/3node_fork_summary.csv}{\datatable};
    \begin{semilogxaxis}[
        width=\columnwidth,
        height=.6\columnwidth,
        xlabel={No. of data points, $\nData$},
        ylabel={Causal parameter, $\averageCausalEffect$},
        xmin=90,
        xmax=11000,
        legend style={font=\scriptsize, at={(0.5,1.10)},
		anchor=south,legend columns=-1},
    ]
        \addplot+ [ only marks, mark=*, mark size=1pt, error bars/.cd, y dir=both, y explicit] table [x=m_obs, y=ace_value, y error=q_ace_standard_error] {\datatable};
        \addplot+ [only marks, mark=*, mark size= 1 pt, error bars/.cd, y dir=both, y explicit] table [x=m_obs, y=ols_value, y error=q_ols_standard_error] {\datatable};
        \addplot [no markers] table [x=m_obs, y=ace_circ] {\datatable};
        \legend{{$\averageCausalEffectSet_{\confidenceLevel,\nData}$},{$\regCoefficientSet_{\confidenceLevel,\nData}$},{$\averageCausalEffectTarget$}}
    \end{semilogxaxis}
\end{tikzpicture}
     \caption{$(1-\confidenceLevel)$-confidence intervals for $\averageCausalEffectTarget$ computed under a linear Gaussian \scm{} with matrix $\semCoeffMat ''$, for which $z$ is valid control variable.}
     \label{fig:fork_asymptotics}
\end{figure}

\subsection{Calibration and Normality}\label{subsection:calibration}
To assess the calibration of $\averageCausalEffectSet_{\confidenceLevel,\nData}$, we set $\nData$ to be $10^2$ or $10^4$ and generate repeated datasets from a linear Gaussian data model with matrix \[\semCoeffMat=\begin{bmatrix}
          0 & -2  & 1.6 & 0    \\
          0 & 0   & 0   & 0    \\
          0 & 1.2 & 0   & -0.5 \\
          0 & 0   & 0   & 0    \\
     \end{bmatrix}\]
corresponding to a graph illustrated in Figure~\ref{fig:calibration_dag}.
\begin{figure}
     \centering
     \begin{tikzpicture}[->,>=stealth',shorten >=1pt,auto,node distance=1.5cm,semithick]
  \node[circle,draw, minimum size=20pt] (Z2) {$\adjustmentVar_1$};
  \node[circle,draw, minimum size=20pt] (X) [above right of=Z2] {$\decisionVar$};
  \node[circle,draw, minimum size=20pt] (Y) [below right of=Z2] {$\outcomeVar$};
  \node[circle,draw, minimum size=20pt] (Z1) [below right of=X] {$\adjustmentVar_2$};

  \path (Z1) edge   (Y)
  (X) edge   (Z1)
  (X) edge   (Y)
  (Y) edge   (Z2)
            ;
\end{tikzpicture}
     \caption{Causal structure of $\semCoeffMat$ in experiment for Calibration and Normality check, where $\adjustmentVar=[\adjustmentVar_1,\,\adjustmentVar_2]$ is  not a valid set of control variables.}
     \label{fig:calibration_dag}
\end{figure}

The coverage probability $\Prob ( \averageCausalEffectTarget \in  \averageCausalEffectSet_{\alpha, \nData}  )$ was estimated to be $94.6 \%$ and $94.9\%$ for $\nData=10^2$ and $10^4$, respectively, using $1000$ Monte Carlo simulations. This is close to $1-\alpha=95\%$ and corroborates Theorem~\ref{thm:confidence_set_for_ace}.
Figure~\ref{fig:calibration_qq} supports the result further by showing a Normality plot for the point estimate $\averageCausalEffectEstN$ over all Monte Carlo simulations.

\begin{figure}[htbp]
     \centering
     \begin{tikzpicture}[baseline]
    \pgfplotstableread[col sep=comma]{./data/calibration_qq.csv}{\datatable};
    \begin{axis}[
        width=\columnwidth,
        height=.6\columnwidth,
        xlabel={Theoretical Quantile},
        ylabel={Data Quantile},
        legend pos=south east,
        legend style={font=\scriptsize},
    ]
        \addplot [no marks, domain=-4:4] {x};
        \addplot+ [only marks] table [x=ace_n_theoretical_z, y=0] {\datatable};
        \addplot+ [only marks] table [x=ace_n_theoretical_z, y=1] {\datatable};
        \legend{,{$\nData=10^2$},{$\nData=10^4$}}
    \end{axis}
\end{tikzpicture}
     \caption{Normal probability plot for realizations of $\averageCausalEffectEstN$. Approximate normality is achieved even under moderate sample sizes.}
     \label{fig:calibration_qq}
\end{figure}

\subsection{Comparison With a Causal Discovery Method}

We compare our method with an alternative method of inferring the average causal effect by learning a linear \scm{} adjacency matrix $\semCoeffMat$ using DirectLiNGAM  \citep{shimizu_directlingam_2011, hyvarinen_pairwise_2013}. Then we can compute bootstrap confidence intervals, although they lack theoretical coverage guarantees. We used the official python implementation, version 1.5.1 from PyPI \url{https://pypi.org/project/lingam/1.5.1/}.

We generate a random adjacency matrix $\semCoeffMat$ for a graph on $\dNodes=10$ nodes and $k=1$, but with the random seed set to the lowest nonnegative integer that yielded a nonzero $\averageCausalEffect$ to make the comparison interesting. We use $\nData=10^{4}$ observations.

For LiNGAM, we computed the confidence interval (CI) using 100 bootstrap samples. For a comparable evaluation of its coverage, we considered the target quantity $\averageCausalEffectTarget$ to be the effect obtained when using LiNGAM with a large numbere of data points ($\nData'=10^6$). 100 Monte Carlo runs were used and the results are presented in Table~\ref{tab:lingam_compare}.

\begin{table}
     \centering
     \caption{Comparison of empirical coverage rate (CR) and the average width of the Confidence Interval (CI) for LiNGAM Bootstrap CI and the CI $\averageCausalEffectSet_{\confidenceLevel,\nData}$ proposed in this article. The nominal CR was set to exceed $1-\alpha = 95\%$}\label{tab:lingam_compare}
     \begin{tabular}{lllrr}
       \toprule
       Noise  & Method & CR    & Avg CI width & Avg $\averageCausalEffectEstN$ \\
       \midrule
       Normal & LiNGAM & 100\% & 2.01         & 0.64                           \\
              & our    & 99\%  & 0.15         & 1.79                           \\
       Exp    & LiNGAM & 92\%  & 0.08         & 1.77                           \\
              & our    & 100\% & 0.54         & 1.79                           \\
       Gumbel & LiNGAM & 85\%  & 0.07         & 1.77                           \\
              & our    & 100\% & 0.46         & 1.79                           \\
       \bottomrule
\end{tabular}

\end{table}

The results show that when data is Gaussian, our proposed method yields both well-calibrated and tighter CIs, than LiNGAM method which has a very wide CI. This expected as LiNGAM was designed for non-Gaussian data. Indeed, for the non-Gaussian examples, LiNGAM produces tighter CIs but they all undercover. By constrast, our method produces more conservative CIs that do not undercover and yield consistent inferences.

\subsection{Sensitivity with Respect to \DAG{} tolerance}
\label{subsection:numerics:sensitivity}
Let $\averageCausalEffectTarget(\dagTolerance)$ denote the average causal effect  \eqref{eq:def:averageCausalEffectTarget} when setting a specific value $\dagTolerance$ in \eqref{eq:def:semcoefftrue}. When data-generating process is given by a linear \scm  \eqref{eq:sem:linear_sem}, we have that the approximation gap $|\averageCausalEffect - \averageCausalEffectTarget(0)| = 0$, where $\averageCausalEffect$ is given by \eqref{eq:def:averageCausalEffectInSem}.
The gap should decrease with $\dagTolerance$ such that ideally $\lim_{\dagTolerance \to 0} |\averageCausalEffect - \averageCausalEffectTarget(\dagTolerance)|=0$  and, moreover. An analytical study is, however, beyond the scope of the tools considered herein and we therefore resort to a numerical sensitivity study.

First, we generate random \DAG{}-matrices $\semCoeffMat$. For every $\semCoeffMat$, we form the numerically approximation $\hat{\averageCausalEffectTarget}(\dagTolerance)$ by replacing $\En$ with the closed for expression for $\E$ in \eqref{eq:def:augLag}. In Figure~\ref{fig:epsilon-limit}, we illustrate the approximation gap $|\averageCausalEffect - \hat{\averageCausalEffectTarget}(\dagTolerance)|$. As expected the gap decreases sharply with $\dagTolerance$, until we reach finite precision effects arising mainly from the L-BFGS-B implementation.

\begin{figure}
     \centering
     \begin{tikzpicture}[baseline]
    \pgfplotstableread[col sep=comma]{./data/dagtol_pgfplots.csv}{\datatable};
    \begin{axis}[
        width=\columnwidth,
        height=.6\columnwidth,
        xlabel={$\dagTolerance$},
        ylabel={$|{\averageCausalEffect - \hat{\averageCausalEffectTarget}(\dagTolerance)}|$},
        xmode=log,  
        ymode=log,
        legend style={font=\scriptsize},
        xmin=1e-9,
        xmax=1e2,
    ]
    \foreach \y in {
1.1392723396082403
,0.1937917407495311
,1.285846198734098
,1.1420203278694752
,0.2970456915441977
,1.5075613943267605
,1.5455653003752525
,0.9413888160321484
,0.6245695864066603
,1.750413866544636
        }
        \addplot +[mark=none,gray,dashed,thick] coordinates {(\y, 1e-8) (\y, 1e0)};
    \foreach \k in {0,1,...,9}
        \addplot [mark=none,color=black] table [x=dag_tolerance, y=ace_abs_err-\k] {\datatable};
    \end{axis}
\end{tikzpicture}
     \caption{The error between $\averageCausalEffect$ \eqref{eq:def:averageCausalEffectInSem} for a randomly generated matrix $\semCoeffMat$ and the numerically evaluated $\hat{\averageCausalEffectTarget}(\dagTolerance)$ from \eqref{eq:def:averageCausalEffectTarget} and \eqref{eq:def:semcoefftrue}, over a range of $\dagTolerance$. Each solid line corresponds to the error for a randomly drawn matrix, with a corresponding value of $\dagToleranceMax$ shown as a vertical grey dashed line.
          For $\dagTolerance \lesssim 10^{-7}$ the numerical precision of our numerical solver limits the precision of the results.
     }
     \label{fig:epsilon-limit}
\end{figure}

For some of the random matrices, we notice that when $\dagTolerance > \dagToleranceMax$ we obtain unreliable approximations. A more detailed discussion is provided in Section~\ref{subsection:extra_sensitivity} in the supplementary material.

In the work of \citet{ng_ontheconvergence_2020}, it is shown that the convergence guarantees for augmented Lagrangian method do not hold and that its precision is finite as it terminates when the quadratic penalty $\rho$ approaches infinity --- in agreement both with our theoretical and experimental results.

\section{Conclusion}

We have developed a method that is capable of inferring average causal effects without the need to specify valid control variables, when the data-generating process can be described by a linear \scm. The methodology is based on characterizing \DAG{}-structures, which involve combinatorial constraints, using a continuously differentiable constraint. By considering a class of almost-\DAG{} matrices, we derive an asymptotically valid confidence interval building on a theory of equality-constrained M-estimation. The theoretical results were further corroborated in numerical studies with synthetic data.

Further research includes developing numerical search methods that are better tailored to approximate the constrained M-estimator upon which the confidence interval is based. Another research direction is the study of the properties of \eqref{eq:def:semcoefftrue} when $\dagTolerance \in ( 0 , \dagToleranceMax)$.

\subsubsection*{Contributions}
    Ludvig~Hult made the numerical simulations, the theoretical derivations and typeset the technical parts as well as produced all figueres and diagram.
    All code is due to Ludvig~Hult.

    Dave~Zachariah concieved the idea, guided the work and supported the article authoring.

\subsubsection*{Acknowledgements}
    This work was partly supported by the Swedish Research Council under contract 2018-05040  and the \emph{Wallenberg AI, Autonomous Systems and Software Program (WASP)} funded by Knut and Alice Wallenberg Foundation.

\bibliography{hult_502}

\clearpage
\onecolumn
\section{Supplementary material}
\subsection{Lemmas and proofs}

\begin{lemma} \label{lemma:m1_is_kronecker}
For the matrix $\semScaleMatrix = (\eye - \mutilatingMatrix \semCoeffMat^\T)^{-1}$, the element $\semScaleMatrix_{1i}$ is equal to the Kronecker delta $\kroneckerDelta_{1i}$, for $W\in \R^{\dNodes\times\dNodes}$ and $\mutilatingMatrix$ from equation \eqref{eq:def:mutilatingMatrix}.
\end{lemma}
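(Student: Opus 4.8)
The plan is to exploit the one structural fact we have about the mutilating matrix $\mutilatingMatrix$: by definition its first row is identically zero. Hence, for \emph{any} $\semCoeffMat$, the product $\mutilatingMatrix\semCoeffMat^\T$ also has a vanishing first row, and therefore the first row of $A \triangleq \eye - \mutilatingMatrix\semCoeffMat^\T$ agrees with the first row of the identity, i.e.\ $A_{1j} = \kroneckerDelta_{1j}$ for every $j$. This is the only place the quantifier ``for any matrix $\semCoeffMat$'' enters — nothing about the acyclicity or the entries of $\semCoeffMat$ is needed.

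Next I would use that $A$ is invertible (which is exactly the standing full-rank assumption under which $\semScaleMatrix = A^{-1}$ is defined) and read off the first row of the identity $A\semScaleMatrix = \eye$. Writing out the $(1,i)$ entry gives $\sum_k A_{1k}\semScaleMatrix_{ki} = \kroneckerDelta_{1i}$, and substituting $A_{1k} = \kroneckerDelta_{1k}$ collapses the sum to the single term $\semScaleMatrix_{1i}$. Thus $\semScaleMatrix_{1i} = \kroneckerDelta_{1i}$, which is the claim.

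There is essentially no obstacle: the lemma is a one-line consequence of the block structure of $\mutilatingMatrix$. The only point requiring a moment's care is the direction of multiplication — one must use $A\semScaleMatrix = \eye$ (so that the known first row of $A$ is the left factor and propagates to the first row of the product), rather than $\semScaleMatrix A = \eye$, which would combine all rows of $\semScaleMatrix$ and not yield the result directly.
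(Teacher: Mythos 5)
Your proof is correct, and it takes a genuinely more elementary route than the paper's. Both arguments hinge on the same structural fact — the first row of $\mutilatingMatrix$ vanishes, so the first row of $A = \eye - \mutilatingMatrix\semCoeffMat^\T$ is $(1,0,\dots,0)$ — but the paper then passes through Cramer's rule: it shows the cofactors $\cofactorMatrix_{i1}$ vanish for $i>1$ (since the relevant minors contain a zero row) and evaluates $\det A$ by Laplace expansion along the first row to cancel the remaining $\cofactorMatrix_{11}$. You instead read the claim directly off the $(1,i)$ entry of $A\semScaleMatrix = \eye$, which collapses to $\semScaleMatrix_{1i}=\kroneckerDelta_{1i}$ with no determinants or cofactors at all. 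Your version is shorter and avoids any implicit division by $\cofactorMatrix_{11}$; the paper's computation buys nothing extra here. Your closing remark is also the right point of care: it is $A\semScaleMatrix=\eye$ (known first row of $A$ on the left) that propagates to the first row of the product, not $\semScaleMatrix A=\eye$. Invertibility of $A$ is, as you note, exactly the standing assumption under which $\semScaleMatrix$ is defined, so nothing further is needed.
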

\begin{proof}[Proof of Lemma \ref{lemma:m1_is_kronecker}]
Using Cramers rule $\semScaleMatrix_{1i}=\frac{1}{\det(\eye-\mutilatingMatrix\semCoeffMat^\T)}\cofactorMatrix_{i1}$, where $\cofactorMatrix$ is the cofactor matrix of $(\eye-\mutilatingMatrix\semCoeffMat^\T)$.

By definition of a cofactor as plus/minus a minor, and that the first row of $(\eye-\mutilatingMatrix\semCoeffMat^\T)$ is zero for all but the first element, $C_{i1}$ is zero for $i>1$, so $C_{i1} = \kroneckerDelta_{i1}\cofactorMatrix_{11}$

By Laplace expansion of $\det(\eye-\mutilatingMatrix\semCoeffMat^\T)$ along the first row
\[ \det(\eye-\mutilatingMatrix\semCoeffMat^\T) = \sum_{k=1}^d (\eye-\mutilatingMatrix\semCoeffMat^\T)_{1k} \cofactorMatrix_{1k} = \cofactorMatrix_{11} \]

We conclude $\semScaleMatrix_{1i} = \frac{1}{\cofactorMatrix_{11}}\kroneckerDelta_{i1}\cofactorMatrix_{11} = \kroneckerDelta_{1i}$
\end{proof}

\begin{proof}[Proof of lemma \ref{lemma:averageCausalEffectInSem}]
We need to show the result of equation \eqref{eq:def:averageCausalEffectInSem}.
Introduce $\semScaleMatrix = (\eye-\mutilatingMatrix \semCoeffMat)^{-1}$.

The proof follows by a direct computation, using Lemma \ref{lemma:m1_is_kronecker}. The noise covariance under the interventional distribution $\semInterventionNoiseCovariance$ is diagonal by assumption, which is also key.
\begin{align}
\averageCausalEffect(\semCoeffMat) 
&= \frac{\covint_{\semCoeffMat}[\decisionVar,  \outcomeVar]}{\varint_{\semCoeffMat}[\decisionVar]} \\
&= \frac{\covint_{\semCoeffMat}[\semVector,  \semVector]_{1,2}}{\covint_{\semCoeffMat}[\semVector,  \semVector]_{1,1}} \\
&= \frac{\sum_{i,j=1}^\dNodes M_{1j}M_{2i}\semInterventionNoiseCovariance_{ij}} {\sum_{i,j=1}^\dNodes M_{1j}M_{1i}\semInterventionNoiseCovariance_{ij}} \\ 
&= \frac{\sum_{i=1}^\dNodes M_{2i}\semInterventionNoiseCovariance_{i1}}{\semInterventionNoiseCovariance_{11}} \\ 
&= \frac{M_{21}\semInterventionNoiseCovariance_{11}}{\semInterventionNoiseCovariance_{11}} \\ 
&= M_{21}
\end{align}
This completes the proof.
\end{proof}

 We notice that there is nothing in the proofs of Lemma~\ref{lemma:m1_is_kronecker} and Lemma~\ref{lemma:averageCausalEffectInSem} specific about the first and second component - redefining the matrix $\mutilatingMatrix$ accordingly, it is straight forward to generalize the result if needed. 
 To keep the notation simple, we do stay with the convention that the first component is the one we intervene on, and that the second is the outcome of interest.

\begin{lemma}
\label{lemma:nabla_h}
	The function $h$ of \citet{zheng_dags_2018} has a closed form matrix gradient. It is ${\nabla h(W) = 2W \hadamard (\exp [ W\hadamard W ])^\T}$.
\end{lemma}
This formula is reported by \citet{zheng_dags_2018}, but without derivation. The result follows from liberal application of the chain rule.
\begin{proof}[Proof of Lemma \ref{lemma:nabla_h}]
$\frac{\partial}{\partial A_{i,j}} \tr A^k = k(A^{k-1})^\T_{i,j}$ by the product rule for derivation, and cyclicity of traces.

By series expansion and using the equation above
$\frac{\partial}{\partial A_{i,j}} \tr \exp [ A ] = (\exp[A])^\T_{i,j}$

We have that  $\frac{\partial (W \hadamard W)_{k,l}}{\partial W_{i,j} } = 2W_{i,j} \delta_{i,k} \delta_{j,l}$ using the Kronecker delta symbol.

The chain rule for differentiation now says
$\frac{\partial}{\partial W_{i,j}} \tr \exp [ W \hadamard W] 
= \sum_{k,l} \frac{\partial \tr \exp [ W \hadamard W]}{\partial (W\hadamard W)_{k,l}} \frac{\partial (W \hadamard W)_{k,l}}{\partial W_{i,j} }
= 2W_{i,j}\frac{\partial \tr \exp [ W \hadamard W]}{\partial (W\hadamard W)_{i,j}} 
= 2W_{i,j}(\exp[W \hadamard W])^\T_{i,j}$

The rest is a matter of notation and diffrentiating a constant.
\end{proof}

\begin{lemma} \label{lemma:nonconvexWset}
The set of all \DAG{}:s, $\semCoeffMatSet_0$ in \eqref{eq:def:semCoeffMatSet}, has the following properties
\begin{enumerate}
    \item  All points of $\semCoeffMatSet_0$ are boundary points (i.e., empty interior)
    \item $\semCoeffMatSet_0$ is a direct sum of linear subspaces, so it is a unbounded set, and a cone
    \item $\semCoeffMatSet_0$ is nonconvex. The convex hull of $\semCoeffMatSet_0$ is the set of all real $\dNodes\times\dNodes$-matrices.
    \item $\hFun(\semCoeffMat)=0$ iff $\nabla h(W) =0$.
\end{enumerate}
\end{lemma}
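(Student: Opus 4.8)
The plan is to reduce all four claims to the combinatorial reading of $\hFun$ together with the topological-ordering characterization of acyclicity, and to establish those two facts first. Write $A\triangleq\semCoeffMat\hadamard\semCoeffMat$, which has nonnegative entries, and let $G(\semCoeffMat)$ be the directed graph with an edge $i\to j$ whenever $\semCoeffMat_{ij}\ne0$. Since $\hFun(\semCoeffMat)=\sum_{k\ge1}\tr(A^k)/k!$ is a sum of nonnegative terms and $(A^k)_{ij}>0$ precisely when $G(\semCoeffMat)$ has a walk of length $k$ from $i$ to $j$, I would argue that $\hFun(\semCoeffMat)=0$ iff $G(\semCoeffMat)$ has no closed walk, i.e. iff $\semCoeffMat$ is a \DAG{}-matrix, which by topological ordering is equivalent to $\semCoeffMat=\permMat^\T U\permMat$ for a permutation matrix $\permMat$ and a strictly upper-triangular $U$. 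I would also note $\hFun(\semCoeffMat)=\hFun(\semCoeffMat^\T)$, since $\tr(B^k)=\tr((B^\T)^k)$ and $(\semCoeffMat\hadamard\semCoeffMat)^\T=\semCoeffMat^\T\hadamard\semCoeffMat^\T$, so the edge-direction convention plays no role. Throughout I assume $\dNodes\ge2$; for $\dNodes=1$ the ambient space is a point and claim~1 is vacuous.

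Given this, claim~2 is immediate: the characterization writes $\semCoeffMatSet_0=\bigcup_{\permMat}\{\permMat^\T U\permMat:U\text{ strictly upper triangular}\}$, a finite union of $\binom{\dNodes}{2}$-dimensional linear subspaces (coordinate subspaces conjugated by fixed permutations); it contains the line $\{t\,\unitBasisMatrix_{12}:t\in\R\}$ and so is unbounded, and $\hFun(\lambda\semCoeffMat)=\sum_{k\ge1}\lambda^{2k}\tr(A^k)/k!$ vanishes whenever $\hFun(\semCoeffMat)=0$, so it is a cone. For claim~1 I would fix $\semCoeffMat\in\semCoeffMatSet_0$ and perturb to $\semCoeffMat+\eta_1\unitBasisMatrix_{12}+\eta_2\unitBasisMatrix_{21}$ with $\eta_1,\eta_2$ arbitrarily small but chosen so that the $(1,2)$ and $(2,1)$ entries of the result are both nonzero; this creates the $2$-cycle $1\to2\to1$, so $\hFun>0$ there, and $\semCoeffMat$ cannot be interior. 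For claim~3 I would use the same $2$-cycle: $\unitBasisMatrix_{12},\unitBasisMatrix_{21}\in\semCoeffMatSet_0$ but their midpoint is not, so $\semCoeffMatSet_0$ is nonconvex; and for any zero-diagonal matrix $M=U+L$ (with $U$ strictly upper and $L$ strictly lower triangular) the identity $M=\tfrac12(2U)+\tfrac12(2L)$ exhibits $M$ as a convex combination of elements of $\semCoeffMatSet_0$, so the convex hull fills the ambient space of zero-diagonal matrices (the intended meaning of ``all real $\dNodes\times\dNodes$ matrices''), the reverse inclusion being clear.

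For claim~4 I would start from the gradient formula of Lemma~\ref{lemma:nabla_h}, read off $[\nabla\hFun(\semCoeffMat)]_{ij}=2\,\semCoeffMat_{ij}\,[(\exp A)^\T]_{ij}=2\,\semCoeffMat_{ij}\,[\exp A]_{ji}$, and use that $[\exp A]_{ji}=\delta_{ji}+\sum_{k\ge1}(A^k)_{ji}/k!\ge0$ is strictly positive iff $j=i$ or $G(\semCoeffMat)$ has a directed walk from $j$ to $i$. If $\hFun(\semCoeffMat)=0$, then $G(\semCoeffMat)$ is acyclic and loop-free, so for each $(i,j)$ with $\semCoeffMat_{ij}\ne0$ there is no walk from $j$ to $i$ (it would close a cycle) and $j\ne i$, hence $[\exp A]_{ji}=0$ and $[\nabla\hFun(\semCoeffMat)]_{ij}=0$; ranging over all entries gives $\nabla\hFun(\semCoeffMat)=0$. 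Conversely, if $G(\semCoeffMat)$ has a cycle, I pick an edge $i\to j$ on it, note the rest of the cycle is a directed walk from $j$ to $i$ of length $\ge1$ so that $[\exp A]_{ji}>0$, and conclude $[\nabla\hFun(\semCoeffMat)]_{ij}=2\semCoeffMat_{ij}[\exp A]_{ji}\ne0$, hence $\nabla\hFun(\semCoeffMat)\ne0$.

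\textbf{Main obstacle.} Claims~1--3 are short once the walk-counting reading of $\hFun$ and the topological-ordering fact are in hand, so the real work is claim~4: I expect the delicate points to be carrying the transpose in Lemma~\ref{lemma:nabla_h} correctly and matching the chosen edge $i\to j$ with a \emph{return} walk from $j$ to $i$, so that their concatenation is exactly a cycle (self-loops being ruled out by $\diag(\semCoeffMat)=0$). Fixing the graph/transpose convention that links $G(\semCoeffMat)$ to $\semCoeffMat$ is the other bookkeeping concern, but $\hFun(\semCoeffMat)=\hFun(\semCoeffMat^\T)$ makes the choice harmless.
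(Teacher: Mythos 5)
Your proof is correct and follows essentially the same route as the paper: the only nontrivial part, claim 4, is argued exactly as in the paper's proof --- the reverse direction by picking an edge $i\to j$ on a cycle and using the return walk of length $K-1$ to make $[\exp(\semCoeffMat\hadamard\semCoeffMat)]_{ji}>0$, and the forward direction by showing acyclicity kills every return walk (the paper phrases this via permutation-similarity to triangular matrices, you via walk counting, which is equivalent). Two minor divergences are actually in your favor: for claim 1 you perturb off-diagonal entries to create a $2$-cycle where the paper perturbs the diagonal (your version also works relative to the zero-diagonal subspace), and for claim 3 you correctly observe that the convex hull can only be the zero-diagonal matrices, whereas the paper's literal statement (``all real $\dNodes\times\dNodes$ matrices'') and its proof via the basis matrices $\unitBasisMatrix^{ij}$ (which include diagonal ones not in $\semCoeffMatSet_0$) gloss over this.
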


\begin{proof}[Proof of Lemma \ref{lemma:nonconvexWset}]
Only point four is a nontrivial result, as the others have a direct geometrical interpretation.

The first point follows from the fact that for $q$ being any matrix with a nonzero on the diagonal, $\hFun(\semCoeffMat+\varepsilon q) > 0 \quad \forall \varepsilon>0$, even when $\semCoeffMat \in \semCoeffMatSet$

The second point follows from the fact that $\hFun(\semCoeffMat)=0$ iff $\semCoeffMat$ is the weighted directed adjacency matrix of a DAG, and positive scaling that matrix will not affect the cyclicity structure.

The third point: Consider the example $w=\begin{bmatrix} 0 &1 \\ 0 & 0 \end{bmatrix}$. Then, $w,w^\T \in \semCoeffMatSet$, but $(w+w^\T)/2 \not \in \semCoeffMatSet_0$, so $\semCoeffMatSet_0$ is nonconvex. Consider also an arbitrary matrix  $W=\sum_{ij=1}^d w_{ij}\unitBasisMatrix^{ij}$. It is a convex combination of the matrices $\unitBasisMatrix^{ij}$, which all belong to $\semCoeffMatSet_0$. Since $W$ was arbitrary, all matrices are in the convex hull of $\semCoeffMatSet_0$.

The last point needs some more work, and is detailed below.

We start with the forward implication. Since any DAG $W$ is permutation similar to a strictly upper triangular matrix, $(\exp [ W\hadamard W ])^\T$ is permutation similar to a strictly lower triangular matrix, with the same similarity transformation.
$\nabla h(W)$ is therefore permutation similar to the elementwise product between a strictly upper and a strictly lower triangular matrix, which must be the zero matrix.

For the the reverse implication, assume $W$ is not a DAG, so it has some cycle of length $K$, and $1\leq K \leq d$.
Select $i$ and $j$ such that node $i$ and $j$ lies on that cycle. Now $W_{i,j} \neq 0$.
One can go from node $i$ to node $j$ in $1$ step, so one must be able to go from node $j$ to node $i$ in $K-1$ steps.
Therefore $ (W \hadamard W)^{K-1}_{j,i} \neq 0$. This makes sure that the exponential factor in $\nabla h(W)$ has a nonzero $i,j$-entry.

\[ \left[ (\exp [ W\hadamard W ])^\T \right]_{i,j} = \sum_{k=0}^\infty \frac{[(W \hadamard W)^k]_{j,i}}{k!} \neq 0 \]

\[\nabla h(W)_{i,j} = 2W_{i,j} \left[(\exp [ W\hadamard W ])^\T \right]_{i,j}\]

Since this is a product of two positive real numbers, we can conclude that $\nabla h(W)  \neq 0$.
\end{proof}

This result supplements the discussion of \citet[p.7]{zheng_dags_2018}. Not only is the \DAG{}:s the global minima of $\hFun$, but they are also the zeroes of $\nabla \hFun$.

The fourth point in Lemma~\ref{lemma:nonconvexWset} has during the time of writing this being reported in \citet[lemma 4]{wei_dags_2020}, but with a more different derivation technique valid for a slightly broader class of $h$-functions. It has also been reported in \citet[proposition 1]{ng_graph_2019}, with a proof technique very similar to ours.

\begin{lemma}
\label{lemma:lossAsQuadratic}
The least-squares objective, and its derivatives are
\begin{align}
\mEstLoss_\mEstParameter(v) = \frac{1}{2} (\mEstParametrization\mEstParameter- \vecop(\eye))^\T \left[\semNoiseCovariance^{-1} \kronecker vv^\T\right] (\mEstParametrization\mEstParameter- \vecop(\eye))
\end{align} 
and its gradient and hessian is
\begin{equation} \label{eq:mest:gradient} \nabla \mEstLoss_\mEstParameter(v) =  \mEstParametrization^\T \left[\semNoiseCovariance^{-1} \kronecker vv^\T\right] ( \mEstParametrization\mEstParameter- \vecop(\eye) ) \end{equation}
\[ \nabla^2 \mEstLoss_\mEstParameter(v) =  \mEstParametrization^\T \left[\semNoiseCovariance^{-1} \kronecker vv^\T\right]  \mEstParametrization \]
\end{lemma}
The proof is direct computation, after using the formula $\tr(A^{\T} Y^{\T} BX) = (\vecop(Y))^{\T}[A\kronecker B] \vecop(B)$.
\begin{proof}[Proof of Lemma \ref{lemma:lossAsQuadratic}]

Use the vec-trick $\tr(A^{\T} Y^{\T} BX) = \vecop(Y)^{\T}[A\kronecker B] \vecop(B)$, and find the objective.
\begin{align}
    \mEstLoss_\mEstParameter(v) &= \frac{1}{2}  \norm{\semNoiseCovariance^{-1/2}\left(\eye-\matop(\mEstParametrization\mEstParameter)^\T\right)v}^2 \\
    &= \frac{1}{2} \tr \left[ \semNoiseCovariance^{-1}\left(\matop(\mEstParametrization\mEstParameter)-\eye\right)^{\T}vv^\T \left(\matop(\mEstParametrization\mEstParameter)-\eye)\right) \right] \\
    &= \frac{1}{2} (\mEstParametrization\mEstParameter- \vecop(\eye))^\T \left[\semNoiseCovariance^{-1} \kronecker vv^\T\right] (\mEstParametrization\mEstParameter- \vecop(\eye))
\end{align} 

The rest is differentiation of a quadratic.
\end{proof}

\begin{lemma}\label{lemma:mest:symbols}
The quantities of Lemma~\ref{lemma:asymptotic_normal_mestparam} can be computed to be
\[\Kn = \mEstParametrization^\T   \left[ \semNoiseCovariance^{-1} \kronecker \En \left[  \semVector \semVector^\T \right] \right]\mEstParametrization\]
\[\PiN = \eye - (qq^\T)/(q^{\T}q)\] 
\[q = \mEstParametrization^\T\vecop(2\semCoeffEstN \hadamard (\exp [ \semCoeffEstN\hadamard \semCoeffEstN ])^\T)\]
\[\Jn = \mEstParametrization^\T \tilde \Jn \mEstParametrization \]
\begin{multline}
     (\tilde \Jn )_{d(j-1)+i,d(l-1)+k} =
    \sum_{q,r,o,p=1}^{\dNodes} \Big\{
    \big( \En\left[ \semVector_i \semVector_q \semVector_o \semVector_k \right]- \\
    \En\left[ \semVector_i \semVector_q \right]\En\left[ \semVector_o \semVector_k \right]\big)
    \semNoiseCovariance^{-1}_{j,r}
    \semNoiseCovariance^{-1}_{p,l}
    (\semCoeffMat-\eye)_{q,r}
    (\semCoeffMat-\eye)_{o,p}
    \Big\}
\end{multline}
\end{lemma}
\begin{proof}[Proof of Lemma~\ref{lemma:mest:symbols}]

The expression for $\Kn$ follows from Lemma~\ref{lemma:lossAsQuadratic}.
\begin{multline}
\Kn=
\En[\nabla^2 \mEstLoss_\mEstParameter(v)] = \\
\En \left[ \mEstParametrization^\T \left[\semNoiseCovariance^{-1} \kronecker vv^\T\right]  \mEstParametrization \right] 
=  \mEstParametrization^\T \left[\semNoiseCovariance^{-1} \kronecker \En \left[ vv^\T\right] \right]  \mEstParametrization 
\end{multline}

$\PiN$  is a projection matrix with respect to the orthogonal complement of $q\coloneqq \nabla_{\mEstParameter} \hFun(\matop(L\mEstParameterEstN))$. Since $q$ is a vector, projection on the orthogonal complement is $\PiN = \eye - (qq^\T)/(q^{\T}q)$. The expression $q=\mEstParametrization^\T\vecop(2\semCoeffEstN \hadamard (\exp [ \semCoeffEstN\hadamard \semCoeffEstN ])^\T)$ follows from Lemma~\ref{lemma:nabla_h}, and $\semCoeffEstN = \vecop{\mEstParametrization\mEstParameterEstN}$.

The derivation of $\Jn$ is an mostly tracking indices. Start with $\Jn = \En[\nabla \mEstLoss_{\mEstParameterEstN}(v)\nabla \mEstLoss_{\mEstParameterEstN}(v)^\T]
 -\En[\nabla \mEstLoss_{\mEstParameterEstN}(v)]\En[\nabla \mEstLoss_{\mEstParameterEstN}(v)]^\T$ and apply to  Lemma~\ref{lemma:lossAsQuadratic}. First factor out the $\mEstParametrization$ matrix of \eqref{eq:mest:gradient}, and then covert the rest into indices. Apply the index conversion for vectorizations $\vecop{A}_{d(j-1)+i}=A_{i,j}$ and for kronecker products $[A\kronecker{}B]_{d(i-1)+j,d(k-1)+l}=A_{i,k}B_{j,l}$ when $A$ and $B$ are $\dNodes\times\dNodes$ sized.
\end{proof}

The next lemma collects the assumption verification for applying Corollary~\ref{cor:constrained_m_est} in proof of Lemma~\ref{lemma:asymptotic_normal_mestparam}. Herein we use the redundant norm-constraint, that is in some parts skipped.
\begin{lemma}
\label{lemma:mest:technicalities}
Using the loss function \eqref{eq:def:mest:loss}, and the parameter set $\mEstParameterSet:=\{\mEstParameter \mid| \hFun\left(\matop(\mEstParametrization\mEstParameter\right)-\dagTolerance = 0 \land \norm{\theta}\leq B \}$, we see that
\begin{enumerate}
    \item The techincal conditions for M-estimation \citep[Theorem 12.2]{wooldridge_econometric_2010} holds.
    \item The loss function $\mEstLoss_\mEstParameter(\semVector)$ is two times continously diffrentiable in $\semVector$.
    \item $\mEstParameterSet \coloneqq \{ \mEstParameter \in \mathbb R^p \mid \mEstConstrint(\mEstParameter)=0\}$ for some vector-valued constraint function $\mEstConstrint$ such that $\mEstParameterSet $ is bounded.
    \item The Jacobian matrix $\nabla \mEstConstrint( \mEstParameterEstN)$ has full rank for all $n$.
    \item $\En \left[ \nabla^2 \mEstLoss_{\mEstParameter}(v)\right]$ is invertible for all $\mEstParameter$.
    \item $\mEstParameterTrue$ is the unique minimizer of $ \E[\mEstLoss_\mEstParameter(\semVector)]$
\end{enumerate}
\end{lemma}
\begin{proof}
First notice that \eqref{eq:def:mest:loss} is quadratic in $\mEstParameter$, but also in $\semVector$, which is more clearly seen in \eqref{eq:def:Wn}.
\begin{enumerate}
    \item The technical conditions are (a) that $\mEstParameterSet$ is compact, which follows from closed and boundedness (b) that $\mEstLoss_\mEstParameter(\semVector)$ is borel measurable in $\semVector$ for each $\mEstParameter$, which follow from being quadratic, (c) that $\mEstLoss_\mEstParameter(\semVector)$ is continuous in $\mEstParameter$ for each $\semVector$, which follows from being a quadratic and (d) that there is a dominating function $d(\semVector)\geq |\mEstLoss_\mEstParameter(\semVector)|$ for all $\mEstParameter$ so that $\E[d(\semVector)] < \infty$, which needs a few steps to prove. Observe
    \begin{align}
    |\mEstLoss_\mEstParameter(\semVector)|&=\frac{1}{2}\norm{\semNoiseCovariance^{-1/2}(\eye-\matop(L\theta))\semVector}_2^2\\
    &\leq\frac{1}{2}\sigma_1(\semNoiseCovariance^{-1/2})^2\sigma_1(\eye-\matop(L\theta))^2\norm{v}^2\\
    & \leq C\norm{v}^2\eqqcolon d(v),\end{align}
    where $\sigma_1$ denotes the largest singular value and \[C:=\frac{1}{2}\sigma_1(\semNoiseCovariance^{-1/2})^2\max_{\mEstParameter \in \mEstParameterSet} \sigma_1(\eye-\matop(L\theta))^2 ,\]
    utilizing compactness of $\mEstParameterSet$. Finally $\E[d(v)] = C\E[\norm{v}^2] = C\tr{}[(\eye-W^{\T})^{-1}\semNoiseCovariance(\eye-W)^{-1}] \leq \infty$, using the assumed data generating process \eqref{eq:sem:linear_sem}.
    \item $\mEstLoss_\mEstParameter(\semVector)$ is two times continously diffrentiable in $\semVector$, since it is a quadratic in $\semVector$
    \item The form of $\mEstParameterSet:=\{\mEstParameter \mid| \hFun\left(\matop(\mEstParametrization\mEstParameter\right)-\dagTolerance = 0 \land \norm{\theta}\leq B \}$ can be transformed into equality form by introduction of a slack variable $s$, so that $\mEstParameterSet:=\{\mEstParameter,s \mid| \hFun\left(\matop(\mEstParametrization\mEstParameter\right)-\dagTolerance = 0 \land \norm{\theta}+s^2-B=0 \}$, so $g(s,\mEstParameter)=\begin{bmatrix}\hFun\left(\matop(\mEstParametrization\mEstParameter\right)-\dagTolerance \\ \norm{\theta}+s^2-B\end{bmatrix}$.
    \item By lemma~\ref{lemma:nonconvexWset}, $\nabla \mEstConstrint( \mEstParameterEstN)$ is nonzero over $\mEstParameterSet$, but the gradient with respect to the slack is zero. Furthermore $\nabla_s[\norm{\theta}+s^2-B]=2s$, which is zero only for $s=0$, but we know from \ref{lemma:unconstrained_minimization} that $s\neq 0$. So the two components of $g$ must have linerarly independent gradients, and the jacobian has full rank. Do note that the slack-formulation used here is supressed from the formalism in the rest of the article, since it is an inactive constraint, making the proofs and text less clear with no gain.
    \item  $\En \left[ \nabla^2 \mEstLoss_{\mEstParameter}(v)\right] = \mEstParametrization^\T \left[\semNoiseCovariance^{-1} \kronecker \En[vv^\T]\right]  \mEstParametrization $, which almost surely has full rank. We ignore the measure zero case.
    \item The unicity of $\mEstParameterTrue$ we have to take by assumption, as discussed elsewhere in this article.
\end{enumerate}
\end{proof}

\begin{lemma}
\label{lemma:gradient_of_ace}
The gradient of the causal effect $\averageCausalEffect$ with respect to the parameter $\mEstParameter$ is 
\begin{align}
\left[ \nabla _\mEstParameter \averageCausalEffect(\mEstParameter) \right]_k = -\left( \left[\semScaleMatrix \mutilatingMatrix \kronecker \eye \right] \mEstParametrization \right)_{d+1,k} 
\end{align}
\end{lemma}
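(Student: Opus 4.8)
The plan is to obtain $\nabla_\mEstParameter \averageCausalEffect$ by differentiating the closed form \eqref{eq:def:averageCausalEffectInSem} coordinate-wise. Writing $\semCoeffMat = \matop(\mEstParametrization\mEstParameter)$ and $\semScaleMatrix = (\eye - \mutilatingMatrix\semCoeffMat^\top)^{-1}$ (as in Lemma~\ref{lemma:m1_is_kronecker}), Lemma~\ref{lemma:averageCausalEffectInSem} lets me write $\averageCausalEffect(\mEstParameter) = e_2^\top \semScaleMatrix\, e_1$, where $e_j$ is the $j$-th standard basis vector. So the whole task reduces to differentiating the $(2,1)$ entry of a matrix inverse whose argument is affine in $\mEstParameter$, and then recognising the answer as the $(\dNodes+1)$-th row of $[\semScaleMatrix\mutilatingMatrix\kronecker\eye]\mEstParametrization$.

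The key steps, in order, are: (i) compute $\partial_{\mEstParameter_k}\semCoeffMat = \matop(\mEstParametrization e_k)$, i.e. the $k$-th column of $\mEstParametrization$ reshaped into a $\dNodes\times\dNodes$ matrix; (ii) apply the identity $\partial(A^{-1}) = -A^{-1}(\partial A)A^{-1}$ to $A = \eye - \mutilatingMatrix\semCoeffMat^\top$, noting that $\partial_{\mEstParameter_k} A = -\mutilatingMatrix(\partial_{\mEstParameter_k}\semCoeffMat)^\top$, so that the two sign flips cancel and
\[
\partial_{\mEstParameter_k}\averageCausalEffect(\mEstParameter) \;=\; e_2^\top\, \semScaleMatrix\, \mutilatingMatrix\, (\partial_{\mEstParameter_k}\semCoeffMat)^\top\, \semScaleMatrix\, e_1 ;
\]
(iii) rewrite this scalar as a linear functional of $\vecop(\partial_{\mEstParameter_k}\semCoeffMat) = \mEstParametrization e_k$ using $\tr$-cyclicity together with $\tr(X^\top Y) = \vecop(X)^\top\vecop(Y)$ and $\vecop(ab^\top) = b\kronecker a$; (iv) use Lemma~\ref{lemma:m1_is_kronecker}, which states that the first row of $\semScaleMatrix$ is $e_1^\top$, to discard the resolvent factor that becomes trivial; (v) stack over $k = 1,\dots,\dNodes(\dNodes-1)$ and read off the result as one row of $[\semScaleMatrix\mutilatingMatrix\kronecker\eye]\mEstParametrization$. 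The row index is $\dNodes+1$ because, in the column-major $\vecop$ ordering, position $\dNodes+1$ is precisely the matrix entry $(1,2)$ — the transpose of the $(2,1)$ slot defining $\averageCausalEffect$, the transpose originating from the $\semCoeffMat^\top$ inside \eqref{eq:def:averageCausalEffectInSem}.

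I expect the only genuine difficulty to be the index-and-transpose bookkeeping rather than any conceptual point. In particular I would be careful about (a) the transpose inside $(\eye - \mutilatingMatrix\semCoeffMat^\top)^{-1}$ and how it interacts with the $\vecop$/$\matop$ conventions, (b) the exact columns deleted by $\mEstParametrization$, and (c) which of the two $\semScaleMatrix$ factors in step (ii) is actually the one eliminated by Lemma~\ref{lemma:m1_is_kronecker} — only the first \emph{row} of $\semScaleMatrix$ is trivial, not the first column, so the reduction must be applied on the correct side and the orientation of $\mutilatingMatrix$ matters. Before grinding through the general computation I would verify the sign and the row index on a minimal instance, e.g. $\semCoeffMat$ consisting of the single edge $\decisionVar\to\outcomeVar$, where $\averageCausalEffect$ equals that edge weight and $\nabla_\mEstParameter\averageCausalEffect$ is a coordinate vector; this also doubles as a check that no stray factor of $\semScaleMatrix$ (or commutation matrix) has been dropped.
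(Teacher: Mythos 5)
Your steps (i)--(iii) are correct and follow the same route as the paper's own derivation: entrywise differentiation of the inverse in \eqref{eq:def:averageCausalEffectInSem} followed by the chain rule through $\vecop(\semCoeffMat)=\mEstParametrization\mEstParameter$. Carried out carefully they give, with $\semScaleMatrix=(\eye-\mutilatingMatrix\semCoeffMat^\T)^{-1}$,
\begin{align}
\frac{\partial\averageCausalEffect}{\partial\semCoeffMat_{i,j}} \;=\; +\,(\semScaleMatrix\mutilatingMatrix)_{2j}\,\semScaleMatrix_{i1},
\end{align}
i.e.\ a \emph{plus} sign, exactly as you argue (the minus from $\partial(A^{-1})=-A^{-1}(\partial A)A^{-1}$ cancels the minus from $\partial(\eye-\mutilatingMatrix\semCoeffMat^\T)$). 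The genuine gap is your step (iv): in the expression $e_2^\T\semScaleMatrix\mutilatingMatrix(\partial\semCoeffMat)^\T\semScaleMatrix e_1$ the two resolvent factors that appear are the \emph{second row} $e_2^\T\semScaleMatrix$ and the \emph{first column} $\semScaleMatrix e_1$, and Lemma~\ref{lemma:m1_is_kronecker} only trivializes the first \emph{row} of $\semScaleMatrix$. Neither factor can be discarded, so the reduction from $\semScaleMatrix\mutilatingMatrix\kronecker\semScaleMatrix^\T$ to $\semScaleMatrix\mutilatingMatrix\kronecker\eye$ is not available, and the honest endpoint of your computation is $\left[\nabla_\mEstParameter\averageCausalEffect(\mEstParameter)\right]_k=\bigl(\left[\semScaleMatrix\mutilatingMatrix\kronecker\semScaleMatrix^\T\right]\mEstParametrization\bigr)_{\dNodes+1,k}$, which differs from the statement you set out to prove both in the second Kronecker factor and in sign. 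Indeed the stated formula fails already for a three-node chain $\decisionVar\to\adjustmentVar\to\outcomeVar$ with weights $\semCoeffMat_{13}=a$, $\semCoeffMat_{32}=b$: there $\averageCausalEffect=ab$, so $\partial\averageCausalEffect/\partial\semCoeffMat_{32}=a\neq 0$, whereas the right-hand side of the lemma carries the factor $\kroneckerDelta_{1i}$ coming from the $\eye$ block and therefore vanishes at $(i,j)=(3,2)$.

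Two further remarks. First, your proposed sanity check (a single edge $\decisionVar\to\outcomeVar$, $\dNodes=2$) would expose the sign discrepancy but \emph{not} the structural one, because for $\dNodes=2$ the difference between $\semScaleMatrix^\T$ and $\eye$ is annihilated after multiplication by $\mEstParametrization$; you need $\dNodes\geq 3$ (e.g.\ the mediator chain above) to see that $\eye$ cannot replace $\semScaleMatrix^\T$. Second, for comparison with the paper: its supplementary derivation makes precisely the two moves you flagged as delicate --- the display omits the leading minus of the derivative-of-inverse identity (hence the $-$ in the lemma), and the passage to the main-text formula silently substitutes $\eye$ for $\semScaleMatrix^\T$ without invoking any property that would justify it. So the obstacle you would hit is with the lemma as stated rather than with your steps (i)--(iii); a correct completion of your argument proves the $\left[\semScaleMatrix\mutilatingMatrix\kronecker\semScaleMatrix^\T\right]\mEstParametrization$ version with positive sign, and any use of the lemma downstream (the delta-method variance in Theorem~\ref{thm:confidence_set_for_ace}) should be based on that expression.
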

\begin{proof}[Proof of Lemma \ref{lemma:gradient_of_ace}]
Start from Lemma~\ref{lemma:averageCausalEffectInSem}. Apply derivation rules for matrix inverses, and utilize the unit basis matrices $\unitBasisMatrix^{i,j}$ which zero in every entry, except the $i,j$-entry.
\begin{align}
    \frac{\partial(\averageCausalEffect(W))}{\partial \semCoeffMat_{i,j}} &= \frac{\partial(\semScaleMatrix_{21})}{\partial \semCoeffMat_{i,j}}  \\ &=\sum_{k,l=1}^\dNodes \semScaleMatrix_{2k} \frac{\partial( \eye - \mutilatingMatrix\semCoeffMat^\T))_{kl}}{\partial \semCoeffMat_{i,j}} \semScaleMatrix_{l1} \\
    &=-\sum_{k,l=1}^\dNodes \semScaleMatrix_{2k} \mutilatingMatrix_{km} \unitBasisMatrix^{ij}_{lm} \semScaleMatrix_{l1}  \\
    &= -(\semScaleMatrix \mutilatingMatrix)_{2j}\semScaleMatrix_{i1} \\
    &= -\left[ \semScaleMatrix \mutilatingMatrix \kronecker \semScaleMatrix^\T  \right]_{d+1,\dNodes(j-1)+i} \\
\end{align}

As an aside, we can note that the matrix with these entries has a compact definition, $-(\left[ \semScaleMatrix \mutilatingMatrix \kronecker \semScaleMatrix^\T  \right]) = \frac{\partial \vecop(M^{T})}{\partial \vecop W}$. Armed with this expression and 
\begin{align}
    \frac{\partial \semCoeffMat_{i,j}}{\partial \mEstParameter_k} = \mEstParametrization_{\dNodes(j-1)+i,k}
\end{align}
we can compute 
\begin{align}
\left[ \nabla _\mEstParameter \averageCausalEffect(\mEstParameter) \right]_k 
&= \sum_{i,j=1}^\dNodes \frac{\partial(\averageCausalEffect(W))}{\partial \semCoeffMat_{i,j}}\frac{\partial \semCoeffMat_{i,j}}{\partial \mEstParameter_k} \\ 
=&-\left( \left[\semScaleMatrix \mutilatingMatrix \kronecker \eye \right]  \mEstParametrization \right)_{d+1,k}
\end{align}
\end{proof}

\subsection{Numerical Experiments}
\subsubsection{Detailed sensitivity study}\label{subsection:extra_sensitivity}
In section~\ref{subsection:numerics:sensitivity} we studied the impact of $\dagTolerance$ in relation to our causal effect measure $\averageCausalEffectTarget$. In this section, we provide additional results (in Figure~\ref{fig:sensitivity_details}) that shed more light on the behavior of the solution. 

The computations are performed as in in section~\ref{subsection:numerics:sensitivity}, but with 20 random graphs instead of 10, and a wider range of $\dagTolerance$

Comparing Figures \ref{fig:sensitivity_details}d and \ref{fig:sensitivity_details}b, we note that while setting $\dagTolerance > \dagToleranceMax$ yields an inaccurate non-\DAG{} matrix $\semCoeffOpt(\dagTolerance)$, it may occasionally produce accurate $\averageCausalEffectNumeric(\dagTolerance)$ depending on the unknown data-generating process and the nonlinear mapping in \eqref{eq:def:averageCausalEffectInSem}.

\begin{figure*}[ht!]
    \centering
    \newcommand{\xmin}{1e-15}
\newcommand{\xmax}{1e2}

\begin{subfigure}[b]{0.4\textwidth}
\begin{tikzpicture}[baseline]
    \pgfplotstableread[col sep=comma]{./data/dagtol_pgfplots.csv}{\datatable};
    \begin{loglogaxis}[
        width=\textwidth,
        height=0.7\textwidth,
        xlabel={$\dagTolerance$},
        ylabel={$|\hat{\averageCausalEffectTarget}(\dagTolerance)|$},
        xmin=\xmin,
        xmax=\xmax
    ]
    \foreach \k in {0,1,...,19}
        \addplot [mark=none,color=black] table [x=dag_tolerance, y=ace_abs-\k] {\datatable};
    \end{loglogaxis}
\end{tikzpicture}
\subcaption{The average causal effect estimated for various $\dagTolerance$. Absolute value imposed to allow log-log-plot.}
\end{subfigure}
\hspace{0.1\textwidth}
\begin{subfigure}[b]{0.4\textwidth}
\begin{tikzpicture}[baseline]
    \pgfplotstableread[col sep=comma]{./data/dagtol_pgfplots.csv}{\datatable};
    \begin{loglogaxis}[
        width=\textwidth,
        height=0.7\textwidth,
        xlabel={$\dagTolerance$},
        ylabel={$|\averageCausalEffect - \averageCausalEffectNumeric(\dagTolerance)|$},
        xmin=\xmin,
        xmax=\xmax
    ]
    \foreach \k in {0,1,...,19}
        \addplot [mark=none,color=black] table [x=dag_tolerance, y=ace_abs_err-\k] {\datatable};
    \end{loglogaxis}
\end{tikzpicture}
\subcaption{The absolute error in the estimate of the causal effect. Smilar to figure \ref{fig:epsilon-limit}.}
\end{subfigure}

\vspace{.1\textwidth}
\begin{subfigure}[b]{0.4\textwidth}
\begin{tikzpicture}[baseline]
    \pgfplotstableread[col sep=comma]{./data/dagtol_pgfplots.csv}{\datatable};
    \begin{loglogaxis}[
        width=\textwidth,
        height=0.7\textwidth,
        xlabel={$\dagTolerance$},
        ylabel={$\hFun(\hat{\semCoeffMat}(\dagTolerance))$},
        xmin=\xmin,
        xmax=\xmax
    ]
    \foreach \k in {0,1,...,19}
        \addplot [mark=none,color=black] table [x=dag_tolerance, y=h_notears-\k] {\datatable};
    \end{loglogaxis}
\end{tikzpicture}
\subcaption{The constraint function $\hFun$ at the numerical approximation of the $\dagTolerance$-almost \DAG{} $\semCoeffOpt$. If the numerical solver is good and $\dagTolerance \leq \dagToleranceMax$, we should have $\hFun(\hat{\semCoeffMat(\dagTolerance}) \approx \dagTolerance$, which is what we observe down to circa $10^{-12}=\augLagConstraintTol$, the tolerated constraint violation of Algorithm~\ref{algo:augLag}. We can also see that when $\dagTolerance > \dagToleranceMax$, the solution does not depend on $\dagTolerance$.}
\label{fig:sensitivity_details:hfun}
\end{subfigure} 
\hspace{0.1\textwidth}
\begin{subfigure}[b]{0.4\textwidth}
\begin{tikzpicture}[baseline]
    \pgfplotstableread[col sep=comma]{./data/dagtol_pgfplots.csv}{\datatable};
    \begin{loglogaxis}[
        width=\textwidth,
        height=0.7\textwidth,
        xlabel={$\dagTolerance$},
        ylabel={$\norm{\semCoeffMat -  \hat{\semCoeffMat}(\dagTolerance) }_{\infty}$},
        xmin=\xmin,
        xmax=\xmax
    ]
    \foreach \k in {0,1,...,19}
        \addplot [mark=none,color=black] table [x=dag_tolerance, y=max-metric-\k] {\datatable};
    \end{loglogaxis}
\end{tikzpicture}
\subcaption{The maximum error in the point estimate of the adjacency matrix $\semCoeffMat$. The results indicate $\dagTolerance \to 0$ is a necessary condition to retrieve the true \DAG{}-matrix $\semCoeffMat$, but numerical precision limits this convergence.}
\end{subfigure}
    \caption{Detailed graphs for the extended sensitivity analysis. We conclude that $\dagTolerance \rightarrow 0$ is a strong indication that $\semCoeffOpt(\dagTolerance) \rightarrow \semCoeffOpt(0)$..}\label{fig:sensitivity_details}
\end{figure*}

In Figure~\ref{fig:sensitivity_details:hfun} we see that to improve the \DAG{}-fidelity (quantified by $\hFun(\semCoeffMat)$), we need to reduce $\augLagConstraintTol$. However, in the numerical runs, we could see that required raising $\augLagPenMax$ further, which may lead to numerical inaccuracies.

\subsubsection{Linearity assumptions violations}
\label{subsection:linearity_assumption_violation}

All numerical experiments above been performed using data drawn from \emph{linear} \scm{}s. We now consider the behavior of the method when the data-generating process is non-linear and study the coverage of the target quantity $\averageCausalEffectTarget$. It is still defined in \eqref{eq:def:averageCausalEffectTarget} as the average causal effect of the optimal linear \scm{} (although it will diverge from the unknown distribution parameter $\averageCausalEffect$ depending on the type of nonlinearity). 

We use the same models as \citet{yu_daggnn_2019}:
\begin{enumerate}
    \item Linear: $\semVector=\semCoeffMat^\T\semVector+\semNoise$ where 
    \item Nonlinear 1: $\semVector=\semCoeffMat^\T \cos(\semVector+\vecOne) +\semNoise$,
    \item Nonlinear 2: $\semVector=2\sin (\semCoeffMat^\T(\semVector+0.5\cdot \vecOne)) + \semCoeffMat^\T(\semVector+0.5\cdot \vecOne) +\semNoise$
\end{enumerate}
The coefficient matrix $\semCoeffMat$ is generated as in section \ref{sec:numerics} and the random elements of $\semNoise$ are drawn independently as $\normal (0,1)$. Let $\vecOne$ denote a vector of ones, and $\cos(\cdot)$ and $\sin(\cdot)$ on vectors be defined entry-wise.
For each of these models $\nData=10^{3}$ data points are generated. 

We performed $200$ Monte Carlo runs and report the empirical coverage rate $CR$  of $\averageCausalEffectSet_{\alpha, \nData}$ in Table \ref{tab:nonlinear_results}, $\dNodes$ is the number of nodes in the \scm{} and $k$ denotes the number of number of expected edges per node. We find that in all cases the empirical coverage rate exceeds the target $1-\alpha = 95\%$, in accordance with the theory, but the confidence interval is more conservative in the nonlinear cases than the linear case. 

\begin{table}
    \centering
    \caption{Empirical coverage rates of $\averageCausalEffectSet_{\nData,\confidenceLevel\%}$ from numerical experiment on linear assumption violation. Nominal coverage set to $1-\confidenceLevel=95\%$.}\label{tab:nonlinear_results}
    \begin{tabular}{ccccc}
      \toprule
      $\dNodes$ & $k$ & linear&nonlinear1&nonlinear2\\
      \midrule
      5 & 1 & 98.0\% & 97.0\%& 99.5\% \\
      5 & 2 & 97.5\% & 96.5\%& 100.0\% \\
      10 & 1 & 96.0\% & 98.5\%& 99.5\% \\
      10 & 2 & 95.5\% & 96.5\%& 100.0\% \\
      \bottomrule
    \end{tabular}
\end{table}

\subsubsection{Misspecified latent covariance structure}
One of the major challenges of the method is the assumption of an approximately known latent covariance $\semNoiseCovariance$. This section explores the sensitivity to misspecification in this parameter.

First, we restate \citet[Theorem 9]{loh_high-dimensional_2014}. 
Let $W_1 \gg W_0$ if the directed graph encoded by $W_1$ is a supergraph of $W_0$. \emph{I.e.} for all indices $i,j$, $[W_0]_{i,j} \neq 0$ implies $[W_1]_{i,j} \neq 0$. The converse, $W_1 \not\gg W_0$ means that there is some component of $W_1$ that is zero, even though the corresponding  component of $W_0$ is not. 
Define the \emph{additive gap} $\xi $ to be the difference in expected squared loss between the optimal DAG adjacency matrix and the second best one among the non-supergraph-models. Compare the following with \eqref{eq:def:semcoefftrue}. Define
\begin{align}
\mathtt{score}(W)\coloneqq& \E\Big[ \norm{\semNoiseCovariance^{-1/2}  \left(\eye - \semCoeffMat^\T\right)\semVector }^2 \Big] \\
W_{0} \coloneqq& \argmin_{\semCoeffMat \in \semCoeffMatSet_0} \mathtt{score}(W) \\
\xi \coloneqq& \min_{\substack{\semCoeffMat \in \semCoeffMatSet_0\\ \semCoeffMat\not\gg \semCoeffMat_0 }} \left\{ \mathtt{score}(W)\right\} - \mathtt{score}(W_0) 
\end{align}
This gap is defined from the data generating process uniquely, and can only be computed if the the data generating latent covariance $ \semNoiseCovariance$ is known - at least up to a scale factor. 
When this is not known, we assume some latent variance structure $\widehat \semNoiseCovariance$, and quantify our misspecification by the condition number $\misspecCond$.

\begin{lemma}[Loh Bühlmann, Lemma 9]
If
\[\misspecCond \leq 1+\frac{\xi}{\dNodes} \label{eq:good noise covariance}\]
then $W_0 \in\argmin_{\semCoeffMat \in \semCoeffMatSet_0} \E\Big[ \norm{\assStruct^{-1/2}  \left(\eye - \semCoeffMat^\T\right)\semVector }^2 \Big]$. If the inqeuality is strict, then $W_0$ is the unique minimizer.
\end{lemma}

If the structure is correctly assumed, \emph{i.e.} $\semNoiseCovariance = \semNoiseScale \assStruct$ for some scaling factor $\semNoiseScale$, then 
\[ \min_{\semCoeffMat \in \semCoeffMatSet_0} \E\Big[ \norm{\assStruct^{-1/2}  \left(\eye - \semCoeffMat^\T\right)\semVector }^2 \Big] = s\dNodes\]
so we can estimate the scale factor $s$ from data, assuming that we have the correct latent covariance structure $\widehat \semNoiseCovariance$.\citep[Corollary 8]{loh_high-dimensional_2014} Denote this empirical estimate $\hat s$.

How does these results affect the confidence interval of Theorem~\ref{thm:cofidence_interval_for_ace}? We replace $\semNoiseCovariance $ in \eqref{eq:mest:score_variance} with $\hat \semNoiseScale \assStruct$ using the biased estimate of the scale $\semNoiseScale$.
\footnote{The estimate is most likely biased since most likely $\assStruct$ is not proportional to the true data generating $\semNoiseCovariance$.}
We conducted numerical studies aiming to illustrate that the confidence interval is good when $\misspecCond$ is small enough. 

We generate data as in \ref{subsection:calibration}, but with a random latent noise matrix $\semNoiseCovariance$. The matrix is diagonal, with entries drawn uniformly iid from from the interval $[1-\Delta,1+\Delta]$, and $\Delta=\frac{1-\kappa_{max}}{1+\kappa_{max}}$. We use $\widehat\semNoiseCovariance=\eye$ as before. This guarantees that $\misspecCond \leq \kappa_{max}$.

For each draw of $\nData$ data points, compute $\misspecCond$, as well as $\averageCausalEffectTarget$ and $\averageCausalEffectSet$ as described in section \ref{sec:numerics}.

\begin{figure}
    \centering
    \begin{tikzpicture}[baseline]
        \pgfplotstableread[col sep=comma]{./data/misspec_scatter.csv}\datatable
    \begin{axis}[%
        ,xlabel=$\misspecCond$
        ,ylabel=$\averageCausalEffectTarget$
        ,xmin=1
        ,xmax=4
        ]
        \addplot+[only marks, mark=x, mark options={scale=0.6,color=black}] table [x=condition, y=ace_circ]{\datatable};
        \addplot[no marks, samples=2, black, dashed] {0.92};
    \end{axis}
\end{tikzpicture}
    \caption{The average causal effect $\averageCausalEffectTarget$ is in general close to the true value, except when the condition number $\misspecCond$ becomes larger than some threshold value. This computation is not dependant on the number of data points drawn. Every run is marked with an $x$, and the true average causal effect is denoted with a dashed hosrizontal line, mostly occluded by the $x$-marks.}
\end{figure}

\begin{figure}
    \centering
    \begin{tikzpicture}[baseline]
    \pgfplotstableread[col sep=comma]{./data/misspec_coverage_100.csv}{\datatable};
    \begin{axis}[%
        ,xlabel=$\misspecCond$
        ,ylabel=Empirical Coverage
        ,yticklabel={\pgfmathparse{\tick*100}\pgfmathprintnumber{\pgfmathresult}\%},
        ,ymax=1.01
        ,ymin=0.7
        ,xmin=1
        ,xmax=4
        ]

        \addplot[dashed] table[x=x,y=target]{\datatable};
        \addplot[const plot mark right, thin, white, name path=A]  table [x=x, y=yUpper]{\datatable};
        \addplot[const plot mark right, thin, white, name path=B]  table [x=x, y=yLower]{\datatable};
        \addplot[black!10] fill between[of=A and B];
        \addplot[const plot mark right,thick]  table [x=x, y=y]{\datatable};
    \end{axis}
\end{tikzpicture}  
    \caption{For $\nData=100$. Empirical coverage, as the misspecification is increased. 1000 runs with random noise matrices $\semNoiseCovariance$ run. For each run, we have computed if $\averageCausalEffectTarget \in \averageCausalEffectSet$ or not. The runs have been binned in groups of $n_b=100$, and each bin $b$ has an empirical coverage rate $\hat p_b$ computed. The shaded area represent $\hat p \pm 2\sqrt{\frac{\hat p(1-\hat p)}{n_b}}$. In general, the misspecification voids the guarantee for the coverage rate, but as long as the misspecification is small, the coverage rate is close to the promised one. } \label{fig:misspec coverage 100}.
\end{figure}

\begin{figure}
    \centering
    \begin{tikzpicture}[baseline]
        \pgfplotstableread[col sep=comma]{./data/misspec_coverage_10000.csv}\datatable
    \begin{axis}[%
        ,xlabel=$\misspecCond$
        ,ylabel=Empirical Coverage
        ,yticklabel={\pgfmathparse{\tick*100}\pgfmathprintnumber{\pgfmathresult}\%},
        ,ymax=1.01
        ,ymin=0.8
        ,xmin=1
        ,xmax=4
        ]

        \addplot[dashed] table[x=x,y=target]{\datatable};
        \addplot[const plot mark right, thin, white, name path=A]  table [x=x, y=yUpper]{\datatable};
        \addplot[const plot mark right, thin, white, name path=B]  table [x=x, y=yLower]{\datatable};
        \addplot[black!10] fill between[of=A and B];
        \addplot[const plot mark right,thick]  table [x=x, y=y]{\datatable};
    \end{axis}
\end{tikzpicture}  
    \caption{Setup as in Figure~\ref{fig:misspec coverage 100}, but $\nData=10000$.}
\end{figure}

\end{document}